\documentclass[11pt]{article}
\usepackage{booktabs}

\def\math#1{$#1$}

\def\mand#1{$$#1$$}
\def\mandc#1{\mand{\abovedisplayskip=3pt plus 1pt minus 1pt%
\abovedisplayshortskip=0pt plus 1pt minus 1pt%
\belowdisplayskip=3pt plus 1pt minus 1pt%
\belowdisplayshortskip=0pt plus 1pt minus 1pt%
#1}}

\def\frac#1#2{{#1\over #2}}

\def\mld#1{\begin{equation}
#1
\end{equation}}
\def\mldc#1{\mld{\abovedisplayskip=3pt plus 1pt minus 1pt%
\abovedisplayshortskip=0pt plus 1pt minus 1pt%
\belowdisplayskip=3pt plus 1pt minus 1pt%
\belowdisplayshortskip=0pt plus 1pt minus 1pt%
#1}}

\def\eqan#1{\begin{eqnarray*}
#1
\end{eqnarray*}}

\DeclareSymbolFont{AMSb}{U}{msb}{m}{n}
\DeclareMathSymbol{\N}{\mathbin}{AMSb}{"4E}
\DeclareMathSymbol{\Z}{\mathbin}{AMSb}{"5A}
\DeclareMathSymbol{\R}{\mathbin}{AMSb}{"52}
\DeclareMathSymbol{\Q}{\mathbin}{AMSb}{"51}
\DeclareMathSymbol{\I}{\mathbin}{AMSb}{"49}
\DeclareMathSymbol{\C}{\mathbin}{AMSb}{"43}

\def\choose#1#2{\left({{#1}\atop{#2}}\right)}

\def\Z{{\mathbf Z}}

\def\floor#1{{\left\lfloor\,#1\,\right\rfloor}}
\def\ceil#1{{\left\lceil\,#1\,\right\rceil}}
\def\r#1{{(\ref{#1})}}



\newcounter{exercisenum}









\def\dotfil{\leaders\hbox to 1.5mm{.}\hfill}
\newcounter{rmnum}
\def\RN#1{\setcounter{rmnum}{#1}\uppercase\expandafter{\romannumeral\value{rmnum}}}
\def\rn#1{\setcounter{rmnum}{#1}\expandafter{\romannumeral\value{rmnum}}}

\def\Prob{\mathbb{P}}
\def\Exp{\mathbb{E}}

\def\mumin{q}
\newcommand{\remove}[1]{}
\usepackage{tikz,bm}
\usepackage{url}
\usepackage{enumitem}
\usepackage{times}
\usepackage{natbib}
\usepackage{graphicx}
\usepackage{amsmath,amssymb,amsthm}
\usepackage{rotating}
\usepackage{multicol}
\usepackage{latexsym}
\usepackage{verbatim}
\usepackage{color}
\definecolor{lightgrey}{rgb}{0.85,0.85,0.85}
\usepackage{lineno}
\usepackage[ruled]{algorithm2e}
\usepackage{float}


\newcounter{newct}

\newcommand{\bv}{\begin{array}}


\newcommand{\cost}[1]{\text{Cost}(#1)}
\newcommand{\ben}[1]{\text{Ben}(#1)}
\newcommand{\sw}[1]{\text{SW}(#1)}

\newcommand{\correct}{R}
\newcommand{\correctM}{Q}


\newtheorem{thm}{Theorem}
\newtheorem{dfn}{Definition}

\newtheorem{lem}{Lemma}
\newtheorem{ex}{Example}

\newtheorem{coro}{Corrollary}

\newcommand{\ra}{\rightarrow}

\newcommand{\Omit}[1]{}

\renewcommand{\bm}[1]{{\bf\rm #1}}

\usepackage[top=1in, bottom=1in, left=1in, right=1in]{geometry}

\begin{document}
\title{%
A Mathematical Model For Optimal Decisions In A Representative Democracy 
}
\author{ Malik Magdon-Ismail\\ RPI\\ magdon@cs.rpi.edu
 \and Lirong Xia \\ RPI\\ xial@cs.rpi.edu}
\footnotetext{Rensselaer Polytechnic Institute, \{magdon,xial\}@cs.rpi.edu}
\date{}
\maketitle

\begin{abstract}
  Direct democracy is a special
  case of an ensemble of classifiers,
  where every person (classifier) votes on every issue. This fails when the
  average voter competence (classifier accuracy) falls below 50\%, which 
  can happen in
  noisy settings where voters have only limited information, or when
  there are multiple topics and the average voter competence may not
  be high enough for some topics.
  Representative democracy, where voters choose representatives to vote,
  can be an
  elixir in both these situations.
  Representative democracy is a specific way to improve the ensemble of
  classifiers.
  We introduce a mathematical model for
  studying representative democracy, in particular understanding the parameters
  of a representative democracy that gives maximum decision
  making capability. Our main result states that under general and natural
  conditions,
  \begin{enumerate}
    \vspace*{-2pt}
  \item Representative democracy can make the correct decisions simultaneously for
    multiple noisy issues.
  \item When the cost of voting is fixed, the optimal representative democracy requires that
    representatives are elected from  constant sized groups:
    the number of representatives should be linear in the number of voters.
    \item When the cost and benefit of voting are both polynomial, the optimal group size is close to linear in the number of voters. 
   \vspace*{-2pt}
  \end{enumerate}
  This work sets the mathematical
  foundation for studying the quality-quantity tradeoff
  in a representative democracy-type ensemble
  (fewer highly qualified representatives
  versus more less qualified representatives). 
\end{abstract}

\section{Introduction}
Suppose a voter-population of size \math{n}
must vote in a referendum to make an important binary decision to optimize some objective, e.g.~social welfare growth over 10 years.
A typical solution is \emph{direct democracy} which decides based
on a majority
vote, the so called ``wisdom of the crowd.'' Direct democracy works when,
the crowd does indeed have wisdom.
In reality, the voters cannot directly observe which decision is correct. Instead, they form beliefs using perceived information, which can be inaccurate,
misinterpretable or even manipulated. For example, suppose that each voter's chance to vote for the correct decision, her {\em competence},
is i.i.d.~generated uniformly over $[0,0.99]$. Now, the majority
among many
voters makes the wrong decision with near
certainty~\citep{Fey03:Note}.

This example highlights a critical flaw of direct democracy, where voters
participate in decision-making irrespective of their competence. Direct democracy is particularly problematic in high noise situations, where there is often a close tie in people's beliefs between two choices, as in the previous example.
Unfortunately, such close
ties are common in real-life high-stakes scenarios. For example, in the 2016 United Kingdom European Union membership referendum, 51.89\% voted for leave and 48.11\% voted for remain. In the 2016 US Presidential Election, 46.1\% voted for Trump and 48.2\% voted for Clinton.\footnote{These examples are only used to  show real situations where close ties exist. We do not know if direct democracy would succeed or fail in these cases since we do not know what the
``correct'' outcome is.} How can democracy cope with high-stakes noisy
issues where the average voter competence may drop below \math{0.5}, especially
if there is misinformation?

One promising rescue is {\em representative democracy}, where voters form
groups (or districts).
Each group chooses a representative, and representatives decide via a
majority vote. The tradeoff is that there are
fewer representatives than base-voters,
but, in return, each representative is  (hopefully)
better informed, being the ``wisest'' 
from its group, or at least having a higher competence than 
the average in its group members.
Continuing the example above, let us now use representative democracy, where people are divided into households (e.g.~$5$ people per group), and let each group choose the member with highest competence as its representative. Then, with high probability the representative's competence is strictly larger than $0.5$.
Now, with enough representatives, a majority vote will now make the correct decision with near certainty~\citep{Fey03:Note}.


Many countries and organizations adopt a mixture of direct democracy and representative democracy. In the US, voters in each state vote on multiple referenda (direct democracy). In addition, voters elect
members of the congress (representative democracy) to make
decisions for everyone by voting on bills.
Representative democracy can have a {\em fixed number of representatives} regardless of the population (e.g.~the US senate has 100 members, two from each state),
or a {\em fixed group size}, where the number of representatives is proportional to the population (between 1789 and 1913, the US House increases from 65 to 435 members based on the nation's population growth\footnote{After 1913 the number is fixed to 435.}).
While it is widely accepted that representative democracy is efficient due to lower operational cost and has better turnout than direct democracy, there is still a large debate on the following fundamental question: {\em which type
of democracy makes better decisions?} 

We are not aware of quantitative answer to this question, nor a
mathematical framework for analyzing representative democracy w.r.t.~its capability to make correct decisions. This is in sharp contrast to direct democracy, which has been mathematically analyzed in depth
to provide a justification of the ``{wisdom of the crowd},''
which dates back to the {\em Condorcet Jury Theorem}
in the 18th century~\citep{Condorcet1785:Essai}.
Roughly speaking,
the Jury Theorem states that a large group of voters are likely to make a correct decision by majority voting, which ``{\em lays, among other things, the foundations of the ideology of the democratic regime}''~\citep{Paroush98:Stay}.
Direct democracy is just a representative democracy where each
group has one voter. Thus, a mathematical characterization of
optimal representative democracies would also
highlight the subcases
where direct democracy is best.
The goal of this paper is to establish rigorous mathematical foundations of representative democracy, and provide quantitative answers to the following key questions.

{\bf Key questions.}
\begin{enumerate}
\item
What is the optimal number of representatives for representative democracy?
\item How should one optimally
divide the voters into groups, with each group electing one
representative? Specifically, what is the optimal size distribution for
the groups?
\end{enumerate}
We will answer these questions in a general setting, where
the groups satisfy a weak form of homogeneity. At a high level, the
representative election process can be described by a function
\math{\mu} that takes a group as input
and outputs its representative.
By homogeneous, we mean that this representative election function
is the same in every group.
A concrete example of this paradigm is when
each group runs the same type of election process on its members who are
independent and drawn from some underlying voter-distribution.

In our analysis, 
we consider two cases. The first is when there is a fixed cost for the
voting. In this case, the goal of the representative democracy is to maximize
the chances of making the correct decision. This is case is relevant to 
making extremely important decisions where the operational cost of voting
is not considered a valid tradeoff for correctness.
The second case is when the cost of voting increases with the number
of representatives, in which case one
must balance the cost with the benefit that accrues to all \math{n} voters.

\subsection{Our Contributions}
We provide a novel mathematical model of representative democracy w.r.t.~its ability to make correct decisions. With our model, we obtain characterizations on optimal number of representatives in representative democracy, and our main messages are the following.

1. When the cost of voting is fixed, fixed group size is optimal.

2. When the cost and benefit of voting are both polynomial, $O(\log n)$ representatives are optimal, where $n$ is the number of voters.

In our basic model, there is a single binary issue to be decided.  $n$ voters are divided into $L$ groups, each group chooses a representative by a {\em representative selection function}, and the representatives will use majority voting to make a binary decision on the issue. We assume that each voter's type is characterized by her {\em competence}, which is her probability to vote for the correct decision on the issue, and is generated i.i.d.~from a distribution $F$. Let $\ben{n}\in \mathbb R$ denote the benefit of making the correct decision compared to making the wrong decision for $n$ voters, and let $\cost{L}\in \mathbb R$ denote the operational cost for $L$ representatives to vote.

We reduce the representative selection function to a {\em group competence function} $\mu:{\mathbb N}\rightarrow [0,1]$, which maps each  group size to the expected competence of the representative. 
We then extend the Condorcet Jury Theorem to representative democracy by characterizing group competence functions that would choose representatives to make the correct decision with probability $1$ as $n\rightarrow\infty$. Our main theorems are surprising characterizations of the optimal  group size when the cost of voting is fixed.

\vspace{2mm}
{\noindent\bf Theorems~\ref{thm:constant}, \ref{thm:oneissuegeneral} and~\ref{thm:groups-homogeneous} (optimal representative democracy for single issue with fixed cost of voting, informally put).} Under natural and mild conditions on the group competence function \math{\mu} and when $\cost{L}$ is a constant:
\begin{enumerate}\vspace*{-4pt}
\item (Homogeneous groups) The optimal group size \math{K^*(n)}
is at most a constant independent of \math{n}.
\item (Inhomogeneous groups) The optimal number of representatives
\math{L^*} is linear in \math{n}.
\item The optimal group size distribution is nearly homogeneous.
\end{enumerate}
These results hold, independently of the specific details of the
representative selection process.
Let us highlight why the result is unexpected in a concrete context.
Suppose voter competence is drawn from some continuous
density on \math{[0,1]} and a group elects the most competent of its voters as
representative (very optimistic since
it cannot get better than that). Then by choosing larger and
larger groups, the competence of the representative approaches 1.
The price paid
is that there are fewer of these ultra-smart representatives, but there will
still be many of them as \math{n\rightarrow\infty}. Indeed, one might posit
that some optimal tradeoff exists whereby the group size tends to infinity
but at a slower rate than \math{n}
so that the representative gets smarter and smarter \emph{and} there are
more and more representatives. Our theorem establishes the contrary.
The optimal group size will never exceed some constant (the constant's
value may depend on the specific parameters of the selection process).


To prove our results, we use
novel combinations of combinatorial bounds. In addition, some of our
results may be of independent interest, 
for example, Lemma~\ref{lem:poisson-binomial}
answers an open question on the probability
for majority voting to be correct when the average competence of voters is exactly $0.5$ where there are no general results for the
non-asymptotic behavior~\citep{Fey03:Note} and the asymptotic behavior is only
conjectured~\cite[Lemma 5]{Owen89:Proving}.

We then consider the case of polynomial cost and polynomial benefit, that is, there exist constants $q_1>0$ and $q_2>0$ such that $\cost{L}=L^{q_1}$ and $\cost{n}=n^{q_2}$. A special case is linear cost and linear gain, where $q_1=q_2=1$. It turns out that the cost of voting has a significant impact on the optimal group size.

\vspace{2mm}
{\noindent\bf Theorems~\ref{thm:constantlinearcost} (optimal representative democracy for single issue with polynomial cost and polynomial benefit, informally put).} Under natural and mild conditions on the group competence function \math{\mu} and suppose there exist constant $q_1>0$ and $q_2>0$ such that $\frac{\cost{L}}{\ben{n}}=\Theta(\frac{L^{q_1}}{n^{q_2}})$:
\begin{enumerate}\vspace*{-4pt}
\item The optimal homogeneous group size 
is $\Omega(n/\log n)$.
\item When $\mu(K)$ converges to $1$ at a polynomial rate, the  optimal homogeneous group size 
is $\Theta(n/\log n)$
\item If $\mu(K)$ is upper bounded away from $1$, then the optimal group size is $\Theta(n)$.
\end{enumerate}

Finally, we  extend our analysis to the situation where the representative
will have to vote on $d>1$ possibly correlated issues.
This is the case with the US Senate and House members who after election
may have to cast a vote on multiple occasions on different issues (about once
per week).
Now,
the group competence function  $\mu$ must be
extended to a mapping from group size $K$ to a $2^d$-dimensional multinomial distribution. We extend the Condorcet Jury Theorem to this setting, and our main theorem states that the optimal group size is increased by $\Theta(\ln d)$.

\vspace{2mm}
{\noindent\bf Theorem~\ref{thm:multiissue} (optimal representative democracy
for multiple issues with fixed cost, informally put).} Under natural and mild
conditions on the group competence function, the optimal
group size is \math{O(\ln d)} and the optimal
number of representatives is $\Omega(n/\ln d)$ for $d\ge 2$ issues.
\vspace{2mm}

With multiple issues, there is no obvious 
way to elect representatives to achieve consistency
in \emph{every issue}.
For example, when issues are independent,
if a group picks its representative according to its favorite (or random)
issue, the results can be disastrous.
We introduce a natural representative selection process for multiple issues,
the  {\sc max sum} process, which chooses the group member with the maximum
sum of competences across all issues. This mimics a small group (e.g.~a household) choosing its most ``informed'' member. We show that under natural assumptions and for independent issues, {\sc max sum} is
consistent on \emph{all issues}.

\subsection{Related Work and Discussions}
There have been large voices in the US in favor of increasing the number of House representatives, as nowadays each of them represents roughly 700K people, and such number used to be as low as 33K. The arguments are mostly from the perspective of representation rights and the intention of framers of the Constitution and the Bill of Rights~\citep{CNNexpandcongress,NYTenlarge,WashingtonPexpand}. More information and activities can be found at \url{http://www.thirty-thousand.org}. Our work provide a mathematical foundation to analyze the rationale behind this move.

While mathematical models of representative democracy are not new, e.g.~\citep{Besley1997:An-Economic,Auriol2012:On-the-optimal}, we are not aware of previous work that quantitatively characterizes optimal representative democracy w.r.t.~its ability to make correct decisions. Our work is related to the literature on extensions of The Condorcet Jury Theorem to heterogeneous agents, where voters may have different competences~\citep{Nitzan80:Investment,Grofman83:Thirteen,Nitzan84:Significance}. More previous work along this vein
is in one of the three subareas: (1)  understanding the conditions for consistency of majority voting~\citep{Owen89:Proving,Paroush98:Stay,Kanazawa98:Brief,Fey03:Note,Sapir05:Generalized}, (2) studying optimal population size to maximize the correctness of majority voting~\citep{Feld84:Accuracy,Miller86:Information,Gradstein87:Organizational,Paroush89:Robustness,Berend98:When,Ben00:Nonasymptotic,Mukhopadhaya03:Jury,Karotkin03:Optimum,Berend05:Monotonicity,Berend07:Monotonicity,Stone2013:Optimal,Ben11:Condorcet}, and (3) incentivizing voters to increase their competence~\citep{Nitzan80:Investment,Karotkin95:Incentive,Ben03:Investment}.  See~\citet{Nitzan17:Collective} for a recent survey, including a nice overview of extensions of the Condorcet Jury theorem to dependent voters and strategic voters.


Our work on single issue is related to the first subarea (consistency) and the second subarea (optimal size). The key differences are, first, in our work, the competence of the representatives are computed endogenously as a result of partitioning and representative selection, while the competence of voters are given as input in the first subarea above.  Second, in our work, increasing the group size will reduce the number of groups, thus may reduce the correctness of majority voting, even though each representative has a higher competence, while in the second subarea above there is one group of voters with variable size, who directly vote on the issue. In other words, our framework allows for a quantitative analysis of the quality vs.~quantity tradeoff in representative democracy. Our work on multiple issues  is significantly different from previous work.

Our work is related but different to two recent works  on liquid democracy (a.k.a.~delegative democracy). \citet{Cohensius2017:Proxy} studied
 proxy voting, where a set of proxies are selected either randomly or voluntarily, and each voter chooses a proxy to vote for her. The authors compared the accuracy of proxy voting theoretically and experimentally, and identify conditions for proxy voting to be superior than direct democracy.  \citet{Kahng2018:Liquid}  studied whether delegation mechanisms on networks can improve direct democracy w.r.t.~the probability to reveal the ground truth. The main difference between our work and the two paper is the dynamic of decision-making process, and therefore, the main questions are different. In our work (representative democracy), voters are divided into multiple groups, and each group chooses a representative to vote for the group members.  In the work of~\citet{Cohensius2017:Proxy}, proxies are chosen first, and are then weighted by their popularity. In the work of~\citet{Kahng2018:Liquid}, each voter can delegate her vote and all votes delegated to her, to another voter with higher competence. We note that operational cost of voting was also not considered in \citep{Cohensius2017:Proxy} and~\citep{Kahng2018:Liquid}.

Our work is also related to recent research in computational social choice that uses statistics to make better decisions~\citep{Conitzer05:Common,Caragiannis2016:When,Xia11:Maximum,Young88:Condorcet,Procaccia12:Maximum,Pivato13:Voting,Elkind14:Electing,Azari14:Statistical,Xia2016:Bayesian}. These works focus
on direct democracy, not representative democracy.

Lastly, our results on multiple issues is related to multi-issue voting~\citep{Koriyama2013:Optimal,Skowron2015:What,Lang16:Voting,Conitzer2017:Fair}. The main difference is that the consideration in multi-issue voting is often fairness and representativeness, rather than making correct decisions.

\section{Modeling Representative Democracy}
In this section we propose a mathematical model for representative democracy for one issue. We will extend it to multiple issues in Section~\ref{sec:multi}. As in the Condorcet Jury Theorem, we assume that voters' ability to vote for the correct decision is drawn i.i.d.~from a distribution $F$. The voters are divided into $L\ge 1$ groups. For any group $\ell$ with $K$ voters, whose competences are $\{q_{\ell,1},\ldots,q_{\ell,K}\}$, a deterministic or randomized representative selection process $V_\ell$  chooses a member $q_\ell$. The chosen representative casts a vote, and majority voting succeeds if strictly more than half of representatives vote $1$. The process for a group to choose a representative to vote is summarized below.
\mandc{
F\quad
  \tikz[>=latex]{\path(0,0)edge[->,line width=1pt]node[above=-1pt,font=\scriptsize]{generate voters}(2,0)}\quad
  \{q_{\ell,1},\ldots,q_{\ell,K}\}\quad
  \tikz[>=latex]{\path(0,0)edge[->,line width=1pt]node[above=-1pt,font=\scriptsize]{elect representative}(2.5,0)}\quad
  q_\ell\quad
  \tikz[>=latex]{\path(0,0)edge[->,line width=1pt]node[above=0pt,font=\scriptsize]{cast vote}(1.25,0)}\quad
  x_\ell
}

Formally, our basic model is defined as follows. 

\begin{dfn} A representative democracy for one issue is composed of the following component.

$\bullet$ {\bf Issue.} Suppose there is one issue to decide, whose outcome is $1$ (correct), and $0$ incorrect.

$\bullet$ {\bf Partition function $\vec K$.} For any number of voters $n$, $\vec K$ denote a partition function that divides $n$ voters into $L(n)$ groups, that is, $\vec K(n)=(K_1,\ldots,K_{L(n)})> 0$ and $\sum_{i=1}^{L(n)}K_i=n$.

$\bullet$ {\bf Distribution of competence $F$.} We assume that each voter's {\em type} is characterized by her {\em competence}, which is the probability for her to vote for $1$. Each voter's competence is i.i.d.~generated from a distribution $F$ over $[0,1]$.

$\bullet$ {\bf Representative selection process $V_\ell$.} Suppose group $\ell$ has $K$ users whose competences are $q_{\ell,1},\ldots,q_{\ell,K}$ respectively. The group uses a (randomized) process $V_\ell(q_{\ell,1},\ldots,q_{\ell,K})$ to select a representative $q_\ell$, whose vote is represented by a random variable $x_\ell$.

$\bullet$ {\bf Voting by the representatives.} The chosen representatives will vote for $1$ with probability equivalent to her competence, and vote for $0$ otherwise. The majority rule is used to decide the outcome based on representatives' votes. We assume that a strict majority of votes is necessary to make the correct decision. That is, when there is a tie, the outcome is $0$ (incorrect).
\end{dfn}

When each group has exactly $1$ member, we obtain the
direct democracy, otherwise we have the representative democracy as in the following example.

\begin{ex}[Uniform Voters with Uniform Process]\label{example:uniform-uniform}\rm
  Suppose there are $L\ge 1$ groups, each group has $K\ge 1$ members. Each voter's competence is i.i.d.~generated from a uniform distribution $F=\text{Uniform}[a,b]$. For all groups, $V_\ell$ chooses a member uniformly at random. It is not hard to see that majority voting by the representatives is correct with probability no more than $0.5$ when $a+b<1$.
  \hfill$\blacksquare$
\end{ex}


We note that the vote of group $\ell$'s representative, i.e.~the random variable \math{x_\ell}, is characterized entirely by
its expectation, which contains all information needed in the analysis in this paper. Therefore, we will simplify the representative selection process to a single {\em group competence function} $\mu$, which specifies the expected competence of the representative as a function of the group size $K$. 
\begin{dfn}\label{def:rep-dem}
For one issue, a representative democracy is a {\em selection function}
  \math{\mu:\N\mapsto [0,1]}.
\end{dfn}
For example, the selection function for the uniform process in Example~\ref{example:uniform-uniform} can be represented by $\mu_\text{U}(K)=(a+b)/2$ for all $K$. We will see that the selection function significantly simplifies the process in the following two examples. In the next example, the group chooses its most-informed member---the one with the highest competence---as the representative. 
\begin{ex}[Max Process]\label{example:uniform-max}
\rm
As in Example~\ref{example:uniform-uniform}, suppose $F=\text{Uniform}[a,b]$. Each group now chooses the member with the highest competence as the representative. So \math{q_\ell=\max\{q_{\ell,1},\ldots,q_{\ell,K}\}}. In other words, $q_\ell$ is the $K$-th order statistic of $K$ uniformly distributed random variables, which means that $\frac{q_\ell-a}{b-a}$ is a Beta random variable with PDF $\text{Beta}(K,1)$~\cite{Gentle2009:Computational}, whose mean is $\frac{K}{K+1}$. We call this representative democracy the {\sc max} process, denoted by $\mu_{\max}$. Therefore,
$$\mu_{\max}=\Exp[x_\ell]= \frac{K}{K+1}\times (b-a)+a=\frac{1}{K+1}a+\frac{K}{K+1}b$$
\hfill$\blacksquare$
\end{ex}
The {\sc max} process serves as an upper bound on group competence functions, as no other process can select a voter with a higher competence. Often the competence is not directly observable as in Example~\ref{example:uniform-max}. However, when the group size is not too large, for example a group is a household, then it is natural to assume that the family members are able to choose the max-informed representative. Nevertheless, this process might be noisy, which is captured in the next example.
\begin{ex}[Noisy-Max Process]\label{example:uniform-noisy}
\rm
As in Example~\ref{example:uniform-uniform}, suppose  $F=\text{Uniform}[a,b]$. Given a $K$-dimensional vector $\vec p=(p_1,\ldots,p_K)$ such that $p_1\,\cdots\ge p_K\ge  0$ and $\sum_{i=1}^K p_i=1$. For any $i\le K$, suppose the group chooses the member with $i$-th highest competence as the representative with probability $p_i$. Let $q_{\ell,i}$ denote the $i$-th highest competence. It follows that $\frac{q_{\ell,i}-a}{b-a}$ is a Beta random variable with PDF $\text{Beta}(K+1-i,i)$~\cite{Gentle2009:Computational}, whose mean is $\frac{K+1-i}{K+1}$. We call this representative democracy the {\sc noisy-max} process, denoted by $\mu_{\vec p}$. Therefore,
$$
\mu_{\vec p}(K)=\sum_{i=1}^K p_i \left(\frac{i}{K+1}a+\frac{K+1-i}{K+1}b\right)=(\vec p\cdot\vec k_+) a+(\vec p\cdot\vec k_-) b,
$$
where $\vec k_+=(\frac{1}{K+1},\frac{2}{K+1},\ldots,\frac{K}{K+1})$ and $\vec k_-=(\frac{K}{K+1},\frac{K-1}{K+1},\ldots,\frac{1}{K+1})$. \hfill$\blacksquare$
\end{ex}
It is easy to verify that the group competence function for {\sc max} process is monotonically increasing in $K$ and is between $a$ and $b$, while the group competence function for {\sc uniform} process outputs the same value for all $K$.

\section{Optimal Representative Democracy for One Issue}
In this section, we focus on the characterizing optimal representative democracy for one issue.
\subsection{Consistent Representative Democracy}
We first extend the classical Condorcet Jury Theorem to representative democracy.
Let us first formally define consistency, the main desired property of a
representative democracy.
As the number of voters increases, i.e. asymptotically
in \math{n}, it should be possible to choose a partition of $n$ voters into $L$ groups, with potentially different number of voters in each group, such that 
with probability 1 the majority representatives vote for $1$.

Given $\vec K=(K_1,\ldots,K_{L(n)})$, we let $S_{n,\vec K,\mu}$ denote the random variable that represents the fraction of $1$'s in $L(n)$ independent Bernoulli random variables with success probabilities $(\mu([\vec K(n)]_1),\ldots,$ $\mu([\vec K(n)]_{L(n)}))$, where for any $i\le L(n)$, $[\vec K(n)]_i=K_i$ is the $i$-th component of $\vec K(n)$. In other words, $S_{n,\vec K,\mu}$ is $\frac{1}{L(n)}$ of the Poisson trail that represents the representatives' votes. 

\begin{dfn}
Given a partition function $\vec K$ and a group competence function $\mu$, for any $n$, we let $R_n(\vec K,\mu)$ denote the probability for majority voting by representatives according to $\vec K$ and $\mu$ to succeed. That is, 
$R_n(\vec K,\mu)=\Prob(S_{n,\vec K,\mu}>\frac{1}{2})$.
\end{dfn}
The subscript $n$ in $R_n$ is sometimes omitted when causing no confusion.


\begin{dfn}\label{def:rep-dem-consistent}
 A representative democracy with group competence function $\mu$ is {\em consistent} if there exists
  a partition function \math{\vec K(n)} for which the majority voting of representatives succeeds with probability $1$ as  \math{n\rightarrow\infty}, that is, $\lim_{n\rightarrow \infty} R_n(\vec K,\mu)=1$.
\end{dfn}
\begin{thm}\label{thm:cjt}
For each distribution $F$, the representative democracy with group competence function $\mu$ is consistent if and only if there exists $K_*\in \mathbb N$ such that $\mu(K_*)>0.5$.
\end{thm}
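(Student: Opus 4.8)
The plan is to prove the two implications separately by elementary means: the ``if'' direction is just the classical Condorcet Jury Theorem applied to the elected representatives, and the ``only if'' direction follows from a one-line stochastic-dominance comparison against a fair coin. No use of the later, finer results (the Poisson--binomial lemma, etc.) is needed.

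For sufficiency, suppose $\mu(K_*)>1/2$. I would exhibit the essentially-homogeneous partition function with common group size $K_*$: take $\lfloor n/K_*\rfloor$ groups of size $K_*$ and dump the at most $K_*-1$ leftover voters into one of them, so that $L(n)=\lfloor n/K_*\rfloor=\Theta(n)\to\infty$ and every group but one has competence exactly $\mu(K_*)$. The key structural fact, already built into the definitions of $\mu$ and $S_{n,\vec K,\mu}$, is that each representative's vote, after integrating out her random competence, is a $\Bernoulli(\mu(K_i))$ variable, and these are independent across the disjoint groups. Hence the representative tally $L(n)\,S_{n,\vec K,\mu}$ stochastically dominates a sum of $L(n)-1$ i.i.d.\ $\Bernoulli(\mu(K_*))$ variables, whose mean $(L(n)-1)\mu(K_*)$ exceeds $L(n)/2$ by $\Theta(L(n))$ because $\mu(K_*)-1/2$ is a fixed positive constant. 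A Hoeffding bound then gives $\Prob(S_{n,\vec K,\mu}\le 1/2)\le e^{-\Omega(L(n))}\to 0$, i.e.\ $R_n(\vec K,\mu)\to 1$, so $\mu$ is consistent.

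For necessity I would argue the contrapositive: assume $\mu(K)\le 1/2$ for every $K\in\N$ and show that \emph{every} partition function fails. Fix $\vec K(n)$ and let $x_1,\dots,x_L$ (with $L=L(n)$) be the independent representative votes, $x_i\sim\Bernoulli(p_i)$ with $p_i=\mu(K_i)\le 1/2$. Couple each $x_i$ with a $z_i\sim\Bernoulli(1/2)$ so that $x_i\le z_i$ pointwise; taking independent copies, $\sum_i x_i$ is stochastically dominated by $\sum_i z_i\sim\mathrm{Binomial}(L,1/2)$, so $R_n(\vec K,\mu)=\Prob(\sum_i x_i>L/2)\le\Prob(\mathrm{Binomial}(L,1/2)>L/2)\le 1/2$, the last step being symmetry of the fair binomial about its mean (the events $\{>L/2\}$ and $\{<L/2\}$ are equiprobable and, with $\{=L/2\}$, exhaust the space). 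Since $R_n\le 1/2$ for all $n$, the limit cannot be $1$; equivalently, consistency forces $\mu(K_*)>1/2$ for some $K_*$.

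I do not expect a genuine obstacle here; the points requiring a little care are (i) getting the marginal $\Bernoulli(\mu(K_i))$ description of a representative's vote right -- this is where randomized selection is absorbed, and it is exactly the reduction set up before Definition~\ref{def:rep-dem}; (ii) resisting, in the necessity direction, the temptation to case-split on whether $L(n)$ stays bounded or diverges and invoke a central limit theorem -- the dominance-by-a-fair-coin argument dispatches all partition functions uniformly with the clean bound $R_n\le 1/2$; and (iii) the trivial divisibility bookkeeping in the sufficiency direction, which is swamped by the exponential slack in the concentration bound. It is worth noting that the strict inequality in the hypothesis is essential and is matched exactly by the necessity bound: if $\sup_K\mu(K)=1/2$, then $R_n$ is pinned at $\le 1/2$ and consistency still fails.
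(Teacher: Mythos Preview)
Your proposal is correct and takes essentially the same approach as the paper: necessity via stochastic domination by a fair-coin binomial to get $R_n\le\tfrac12$, and sufficiency via the (nearly) homogeneous partition with group size $K_*$ followed by a Condorcet-Jury-type concentration argument. The only cosmetic difference is that the paper cites \citet{Fey03:Note} for the concentration step whereas you invoke Hoeffding's inequality directly, which is arguably cleaner and more self-contained.
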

\begin{proof} $\Rightarrow$:  suppose for the sake of contradiction, $\mu(K_*)\le 0.5$ but there exists $\vec K$ such that $\lim_{n\rightarrow\infty}R_n(\vec K,\mu)=1$. It follows that for any $n$, $R_n(\vec K,\mu)$ is no more than the average value of $L(n)$ Bernoulli trials, each succeeds with probability $0.5$. However, the probability for the latter to be strictly larger than $0.5$ is no more than $0.5$, because its PDF is symmetric and its mean is $0.5$. This leads to a contradiction .$\Leftarrow$:  we choose the partition function such that all but one group have $K_*$ members. It follows that as $n\rightarrow\infty$ the average competence is at least $\frac{n/K_*-1}{n/K_*+1}\mu(K_*)>\frac12$. This means that  $R_n(\vec K,\mu)=1$ as $n$ goes to infinity~\citep{Fey03:Note}.
\end{proof}

{\noindent\bf Costly voting.} We now formally define the benefit of correct decision, cost of voting, and the social welfare for optimization.

\begin{dfn} 
For any $n$ and $L$, let $\ben{n}\in\mathbb R$ denote the benefit of making the correct decision and let $\cost{L}$ denote the monetary cost of maintaining $L$ representatives. Given a partition $\vec K=(K_1,\ldots,K_{L(n)})\in {\mathbb N}^{L(n)}$, the social welfare of $\vec K$ is the expected benefit minus the cost of voting, that is,
$$\sw{\vec K}=\ben{n}\correct_n(\vec K,\mu)-\cost{L}=\ben{n}\left(\correct_n(\vec {K},\mu)-\frac{\cost{L}}{\ben{n}}\right)$$
\end{dfn}

In the following subsections we will characterize optimal $\vec K$ that maximizes $\sw{\vec K}$ for different cases.


\subsection{Optimal Group Size for Single Issue: Fixed Cost of Voting}

In this subsection we focus on the setting where the cost of voting is fixed regardless of the number of representatives. 
Since the cost of voting is fixed, the goal is to find the optimal partition $\vec K$ that maximizes $\correct_n(\vec K,\mu)$.

Before we formally present our  main results, let us first discuss the  effect of increasing the size of groups at a high level. The effect can be seen as a tradeoff between quality (competence of each representative) vs.~quantity (the total number of representatives), and we will see that it comes down to a form of mean vs.~variance tradeoff.

{\bf \noindent The quality vs.~quantity tradeoff.} Suppose $n$ is fixed and we are deciding to use group size of $K_1$ or $K_2$, where $K_1<K_2$ and $\mu(K_2)\ge \mu(K_1)>0.5$, and for the purpose of presentation, suppose that both divides $n$. That is, $n=K_1L_1=K_2L_2$. Let $\vec K_1=(\underbrace{K_1\ldots,K_1}_{L_1})$ and $\vec K_2=(\underbrace{K_2\ldots,K_2}_{L_2})$.

When there are $K_1$ members in each group, each representative has competence $\mu(K_1)<\mu(K_2)$. On the other hand, the number of representatives is $L_1>L_2$. Therefore, we have $\Exp(S_{n,\vec K_1,\mu})=\mu(K_1)\le \mu(K_2)=\Exp(S_{n,\vec K_2,\mu})$, while 
the variance of $S_{n,\vec K_2,\mu}$, which is  $\mu(K_1)(1-\mu(K_1))/L_1$ can potentially be smaller than the variance of $S_{n,\vec K_2,\mu}$.\footnote{It is possible for $S_{n,\vec K_2,\mu}$ to have a smaller variance, if $\mu(K)$ grows faster than linear.} Therefore, the optimal group size $K^*$ minimizes the left tail probability of $S<0.5$. This can roughly be seen as a mean vs.~variance tradeoff, as illustrated using the distribution of
$S_{n,\vec K,\mu}$
in Figure~\ref{fig:tradeoff}, where we set $n=2000$, $F=\text{Uniform}[0.44,0.55]$, the {\sc max} process $\mu_\text{max}$ is used, and we compare the distribution of three group sizes: $K=2$ (low mean, low variance), $K^*=8$ (optimal, middle), and $K=20$ (high mean, high variance).

\begin{figure}[htp]
\centering\includegraphics[trim=0cm 9.9cm 13.3cm 0, clip=true, width=.55\textwidth]{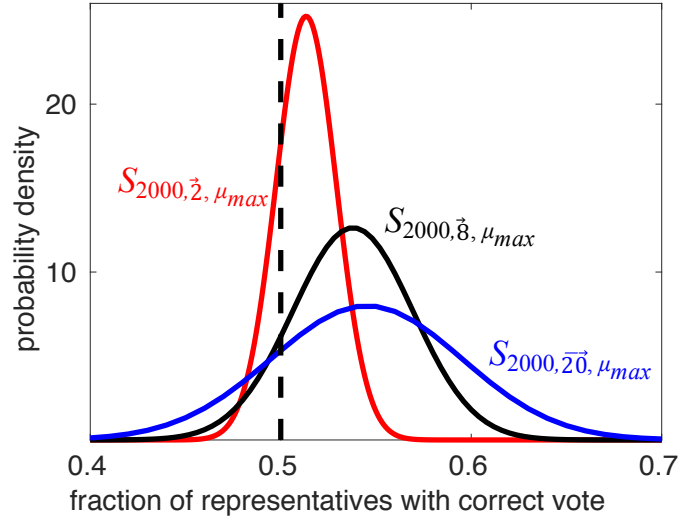}
\caption{The mean vs.~variance tradeoff in choosing different group sizes.\label{fig:tradeoff}}
\end{figure}

Naturally, our next goal is to identify an optimal number of representatives (groups), denoted by $L^*(n)$. 
We will first focus on the specific partition functions where almost all groups have the same size, with the exception for the last group, whose size is allowed to be larger than other groups. This is a natural setting in practice because each representative is supposed to represent equal number of voters. Later in this section we will show how to extend our study to general partition functions. 
\begin{dfn}[Homogeneous groups]
  Given $n$ and a group size $K$, in the homogeneous setting,
  $L=\lfloor n/K\rfloor$ representatives are selected using
  partition function $\vec H_K=(\underbrace{K,\ldots,K}_{L-1}, n-(L-1)K)$. 
\end{dfn} 

Next, we prove a general
result: among all homogeneous group size settings, the optimal group size is   bounded by
    a constant,
    provided that some value of \math{K} can achieve consistency and the group competence function $\mu$ is polynomially bounded away from 1 as $K$ goes to infinity. 
Let
\math{K_\text{hom}^*(n)} denote the optimal homogeneous group size that maximizes $\sw{\vec H_K}$, where $\cost{L}$ is a constant. In other words, \math{K_\text{hom}^*(n)} maximizes $\correct_n(\vec H_{K_\text{hom}^*(n)},\mu)$, the probability for
the majority of representatives to vote for the correct decision.

 \begin{thm}[Optimal homogeneous group size]\label{thm:constant} Suppose the $\cost{L}$ is a constant, there exists $K_*\in\mathbb N$ such that
      $\mu(K_*)\ge\frac12+\epsilon$ for \math{0<\epsilon<\frac12},
      and for all $K\in\mathbb N$, \math{\mu(K)\le 1-A/K^{\alpha}} for constants \math{A>0}
      (w.l.o.g. \math{A\le 1})
      and $\alpha \ge 0$. Then, \math{K_\text{hom}^*(n)\le c K_*}, where
      \mand{
        c=\frac{-4}{\ln(1-4\epsilon^2)}\left(        
    \ln \frac{32}{9\epsilon^2 A}+
    \alpha\ln\frac{4\alpha K_*}{e}-\alpha\ln|\ln(1-4\epsilon^2)|\right)
        .}
\end{thm}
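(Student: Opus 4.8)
The plan is to bound $K^*_{\text{hom}}(n)$ by controlling $\correct_n(\vec H_K,\mu)$ from two sides: a lower bound that holds when $K = K_*$ (guaranteeing consistency-level correctness as $n$ grows), and an upper bound that holds when $K$ is large, showing that the correctness probability drops below what $K_*$ already achieves. If I can show that for every $K > cK_*$ the correctness is strictly worse than at $K = K_*$, then the optimum cannot lie above $cK_*$, which is exactly the claim.

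First I would set up the deviation bounds. For a fixed group size $K$, $S_{n,\vec H_K,\mu}$ is $\frac1L$ times a sum of $L \approx n/K$ independent Bernoulli trials each with mean $\mu(K)$. For the lower bound at $K = K_*$: since $\mu(K_*) \ge \tfrac12 + \epsilon$, a Chernoff/Hoeffding bound gives $\Prob(S \le \tfrac12) \le \exp(-2L_*\epsilon^2)$ with $L_* \approx n/K_*$; more useful here is the multiplicative form, which yields something like $1 - \correct_n(\vec H_{K_*},\mu) \le (1 - 4\epsilon^2)^{L_*/2}$ or a comparable expression — the constants in the theorem (the $\ln(1-4\epsilon^2)$ factor) strongly suggest the relative-entropy Chernoff bound $\Prob(S \le \tfrac12) \le \exp(-L_* \, D(\tfrac12 \,\|\, \mu(K_*)))$ with $D(\tfrac12\|\tfrac12+\epsilon) \ge -\tfrac12\ln(1-4\epsilon^2)$ via the standard inequality. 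For the upper bound at group size $K$: here the representative competence is $\mu(K) \le 1 - A/K^\alpha$, so the failure probability $1 - \correct_n(\vec H_K,\mu) = \Prob(S \le \tfrac12)$ is at least the probability that all $L \approx n/K$ representatives vote $1$ simultaneously fails — more precisely I would lower-bound the left-tail by the single-atom probability that strictly fewer than half vote $1$, or cleaner, bound $1 - \correct_n(\vec H_K,\mu) \ge \Prob(\text{at least one representative votes } 0)^{\text{something}}$... actually the cleanest route is: $\correct_n(\vec H_K,\mu) \le \Prob(\text{all } L \text{ vote } 1)^{\text{?}}$ is wrong; instead use $\correct_n \le 1 - (1-\mu(K))^L \cdot(\text{stuff})$, or simply $\correct_n(\vec H_K,\mu) \le \Prob(S > \tfrac12)$ and bound this above by noting that when $L$ is small (because $K$ is large, $L \approx n/K$ is small) the binomial is too spread out — but $L$ still grows with $n$. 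The right statement: for large $K$, $L = \lfloor n/K\rfloor$ shrinks relative to $\tfrac{n}{K_*}$, and since each trial still has mean bounded away from $1$, the probability of a strict majority is at most $1 - c'(1-\mu(K))$ for a single trial when $L=1$, but for general $L$ I would use $\correct_n(\vec H_K,\mu) \le 1 - \binom{L}{\lceil L/2\rceil}... $ — the honest approach is a Chernoff upper bound on $\Prob(S > \tfrac12)$ is vacuous since the mean exceeds $\tfrac12$. So the correct comparison is not failure-vs-failure but: $\correct_n(\vec H_{K_*},\mu) \to 1$ fast, while $\correct_n(\vec H_K,\mu)$ for huge $K$ has $1 - \correct_n(\vec H_K,\mu) \ge (1-\mu(K))^{?}$ large enough. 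I would make this precise by lower-bounding the failure probability at group size $K$: $1 - \correct_n(\vec H_K, \mu) \ge \Prob(\text{exactly } \lceil L/2\rceil - 1 \text{ or fewer vote }1)$, and in particular $\ge \Prob(\text{no representative votes }1)$ is too weak; better, anti-concentration: $\Prob(S \le \tfrac12) \ge$ a constant times the largest single binomial atom below $\tfrac12$, combined with $(1-\mu(K)) \ge A/K^\alpha$. Comparing $\exp(-\tfrac{n}{2K_*}|\ln(1-4\epsilon^2)|)$ (upper bound on failure at $K_*$) with a lower bound on failure at $K$ that decays only polynomially in the relevant parameters, then solving for when the crossover happens, produces the stated threshold $c K_*$ with $c$ as given.

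The key steps in order: (1) write the relative-entropy Chernoff bound for $1 - \correct_n(\vec H_{K_*},\mu) \le \exp(-(n/K_* - 1)D(\tfrac12\|\tfrac12+\epsilon))$ and use $D(\tfrac12\|\tfrac12+\epsilon)\ge -\tfrac14\ln(1-4\epsilon^2)$ (check the constant — the $-4$ and the factor structure in $c$ pin it down); (2) derive a matching lower bound on $1 - \correct_n(\vec H_K,\mu)$ for arbitrary $K$ in terms of $1-\mu(K) \ge A/K^\alpha$ and $L = \lfloor n/K\rfloor$, using anti-concentration of the Poisson-binomial left tail (e.g. $\Prob(S \le \tfrac12) \ge c_0 (1-\mu(K))$ when $L$ is not too large, or more carefully a bound like $\Prob(\text{minority votes }1) \ge (1-\mu(K))^{\lceil L/2\rceil}$ times a binomial coefficient); (3) impose that if $K > cK_*$ then the lower bound from (2) exceeds the upper bound from (1) for all sufficiently large $n$, which forces $\correct_n(\vec H_K,\mu) < \correct_n(\vec H_{K_*},\mu)$, hence $K$ is not optimal; (4) solve the resulting inequality — it is roughly $\tfrac{n}{K} \cdot (\text{something}) \ge \tfrac{n}{K_*}|\ln(1-4\epsilon^2)|$ versus $\alpha \ln K$ and $\ln(1/A\epsilon^2)$ terms — for $K$, which yields $K \le cK_*$ with the displayed $c$; the $\alpha\ln\frac{4\alpha K_*}{e}$ term comes from inverting a $K \mapsto K^\alpha$ type relation (Lambert-$W$-style bound, handled by the elementary inequality $\ln x \le \alpha x^{1/\alpha}/e \cdot$const or $x \ge \alpha\ln x$ type manipulation).

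The main obstacle I expect is step (2): getting a clean, non-asymptotic lower bound on the left-tail failure probability $\Prob(S_{n,\vec H_K,\mu} \le \tfrac12)$ that is strong enough. Chernoff gives upper bounds on tails easily, but here I need a lower bound on a tail of a Poisson-binomial with mean $\mu(K) > \tfrac12$, i.e. anti-concentration below the mean. The paper hints (via the reference to Lemma~\ref{lem:poisson-binomial} on the $\mu = \tfrac12$ case and the Feller/Owen citations) that they have developed exactly such tools. The delicate point is that this lower bound must capture the dependence on $1 - \mu(K) \ge A/K^\alpha$ with the right exponent so that after taking logs and comparing to the $K_*$ upper bound, the $K^\alpha$ turns into the $\alpha \ln K$ term that gets absorbed into $cK_*$. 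A secondary nuisance is the "last group" in the homogeneous partition having size up to $2K-1$, which perturbs $L$ by at most one and the mean by a lower-order amount — I would absorb this into the $-1$ already visible in bounds like $\tfrac{n/K_*-1}{n/K_*+1}$ from the proof of Theorem~\ref{thm:cjt}, so it should not affect the leading behavior.
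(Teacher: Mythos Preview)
Your high-level strategy matches the paper's: compare the success probability at $K_*$ against that at any $K > cK_*$ by upper-bounding the failure $1 - R_n(\vec H_{K_*},\mu)$ and lower-bounding $1 - R_n(\vec H_K,\mu)$, then show the latter dominates whenever $K>cK_*$. You also correctly flag step~(2) --- the anti-concentration lower bound on failure at large $K$ --- as the crux, and you are right that it is.

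The paper resolves both directions \emph{uniformly} with a single two-sided estimate on the binomial lower tail. After sandwiching $R_n(\vec H_K,\mu)$ between binomial tails $R_-(L,p)$ and $R_+(L,p)$ (this absorbs the last-group nuisance, as you anticipated), it proves directly, via a geometric bound on the tail sum together with Stirling-type estimates on the near-central binomial coefficient, that for $\tfrac12 < p < 1$,
\[
\frac{3}{8p}\cdot\frac{(4p(1-p))^{L/2}}{\sqrt{\pi L}}
\;\le\; 1 - R_+(L,p) \;\le\; 1 - R_-(L,p) \;\le\;
\frac{2}{2p-1}\cdot\frac{(4p(1-p))^{L/2}}{\sqrt{\pi L}}.
\]
This is precisely your ``constant times the largest single binomial atom below $\tfrac12$'' intuition made rigorous: the lower tail is within a factor depending only on $p$ of the single term at $\lceil(L-1)/2\rceil$, which is $\Theta((4p(1-p))^{L/2}/\sqrt{L})$. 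Your Chernoff route for step~(1) would recover the same exponential rate since $D(\tfrac12\|p)=-\tfrac12\ln(4p(1-p))$, but you would still need a separate anti-concentration argument for step~(2); the paper's approach handles both sides at once and delivers the explicit multiplicative constants ($\tfrac{3}{8p}$ and $\tfrac{2}{2p-1}$) that feed directly into the displayed value of $c$. From there the comparison is algebra: use $\mu_K(1-\mu_K)\ge A/(2K^\alpha)$, $L_*/L_K \ge K/(2K_*)$, take logarithms, and linearize $\ln K$ via $\ln x \le \ln(z/e)+x/z$ with $z=-4\alpha K_*/\ln(1-4\epsilon^2)$ --- exactly the Lambert-$W$-style step you anticipated, and the source of the $\alpha\ln\frac{4\alpha K_*}{e}$ term.

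Two small points you did not cover. First, the paper handles the side case $\mu(K)\le\tfrac12$ separately: there $R_+(L_K,\mu_K)\le\tfrac34$ trivially, and one checks from the lower bound above that $R_-(L_*,\mu_*)>\tfrac34$ once $n>cK_*$. Second, no Poisson-binomial anti-concentration is needed here (unlike in the inhomogeneous Theorem~\ref{thm:oneissuegeneral}); the homogeneous trials are i.i.d.\ Bernoulli$(\mu(K))$, so the elementary binomial-tail estimate suffices. In short, there is no wrong turn in your plan; the only genuine gap is the unwritten anti-concentration lemma, and its resolution is more elementary than you feared.
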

    The intuition is that as \math{K} increases, we
    experience diminishing returns with respect to \math{\mu} because
    \math{\mu} is bounded away from 1. On the other hand, there is a loss due to decreasing \math{L=\floor{n/K}}, the number of
    representatives.  We find this constant bound surprising because one may expect that the best tradeoff is achieved when $K$ is a function of $n$, but our theorem proves that it is a constant.
\begin{proof}(of Theorem~\ref{thm:constant}) We first prove a series of lemmas as building blocks.
  W.l.o.g.~suppose each of the first $L-1$ groups has exactly $K$ members.  
  We first observe an elementary bound that allows us to
  ignore the last group whose number of representatives is unknown,
  \mandc{
    \underbrace{\Prob\left[\sum\limits_{\ell=1}^{L-1}x_\ell\ge\ceil{\frac{L+1}{2}}\right]}_{\correct_-(L,\mu(K))}
    \le
    \correct_n(\vec H_{K},\mu)
    \le
    \underbrace{
      \Prob\left[\sum_{\ell=1}^{L-1}x_\ell\ge\ceil{\frac{L-1}{2}}\right]}_{\correct_+(L,\mu(K))}.
  }
  The functions \math{\correct_-} and \math{\correct_+} are Binomial upper tail probabilities,
\remove{We define a lower bound and an upper bound, where the competence of the last representative is $0$ and $1$, respectively.}
\begin{align*}
&\correct_-(L,p)=\sum_{\ell=\ceil{\frac{L+1}{2}}}^{L-1}\choose{L-1}{\ell}p^\ell(1-p)^{L-1-\ell}.\\
&\correct_+(L,p)=\sum_{\ell=\ceil{\frac{L-1}{2}}}^{L-1}\choose{L-1}{\ell}p^\ell(1-p)^{L-1-\ell}.
\end{align*}
For any $n,K,\mu$, and $L=\floor{n/K}$, we have $\correct_-(L,\mu(K))\le \correct_n(\vec H_{K},\mu)\le \correct_+(L,\mu(K))$. The following five lemmas are properties of $\correct_-(L,p)$ and $\correct_+(L,p)$. So far, we have not yet
invoked any properties of the group competence function $\mu$.
  \begin{lem}[Monotonicity of \math{\correct(L,p)}]\label{lem:binomial-monotonic}
    For fixed \math{L}, $\correct_-(L,p)$ and $\correct_+(L,p)$ are increasing in \math{p}. For fixed
    \math{p>\frac12}, $\correct_-(L,p)$ and $\correct_+(L,p)$ are increasing in \math{L}.
  \end{lem}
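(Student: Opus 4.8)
The plan is to regard both quantities as upper-tail probabilities of one binomial. Writing $m=L-1$, $t_+(L)=\lceil(L-1)/2\rceil$ and $t_-(L)=\lceil(L+1)/2\rceil=t_+(L)+1$, we have $\correct_\pm(L,p)=\Prob[\,\mathrm{Bin}(m,p)\ge t_\pm(L)\,]$, where $\mathrm{Bin}(m,p)=\sum_{\ell=1}^{m}x_\ell$ for i.i.d.\ $x_\ell\sim\Bernoulli(p)$. I would then treat the monotonicity in $p$ and the monotonicity in $L$ separately, since they are quite different in flavor.

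For fixed $L$, monotonicity in $p$ is a stochastic-dominance statement, and the cleanest route is a coupling: realize $x_\ell=\mathbf{1}[U_\ell\le p]$ with $U_\ell\sim\Unif[0,1]$ i.i.d., so that $\sum_{\ell=1}^{m}x_\ell$ is, on every sample point, nondecreasing in $p$; hence for each fixed threshold $t$ the event $\{\sum_\ell x_\ell\ge t\}$ is increasing in $p$, and applying this with $t=t_+(L)$ and $t=t_-(L)$ settles $\correct_+$ and $\correct_-$ at once. Equivalently, one can differentiate the finite sum term-by-term; it telescopes to a single manifestly nonnegative term, $\partial_p\correct_\pm(L,p)=(L-1)\binom{L-2}{t_\pm(L)-1}p^{\,t_\pm(L)-1}(1-p)^{L-1-t_\pm(L)}\ge 0$ (this is just the incomplete-beta / order-statistic identity in disguise).

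For fixed $p>\tfrac12$, monotonicity in $L$ I would obtain by peeling off one Bernoulli trial. Passing from $L$ to $L+1$ turns $\mathrm{Bin}(m,p)$ into $\mathrm{Bin}(m+1,p)$ and raises the threshold by $0$ or $1$, the amount depending on the parity of $L$; conditioning on the new trial writes $\correct_\pm(L+1,p)$ as the convex combination $p\,\Prob[\mathrm{Bin}(m,p)\ge t_\pm(L+1)-1]+(1-p)\,\Prob[\mathrm{Bin}(m,p)\ge t_\pm(L+1)]$ of two upper tails of the \emph{same} $\mathrm{Bin}(m,p)$ at adjacent thresholds, one of which is exactly $\correct_\pm(L,p)$ and the other of which differs from it by a single binomial atom $\Prob[\mathrm{Bin}(m,p)=j]$. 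Pairing each atom at $j$ with its mirror atom at $m-j$ and using $p/(1-p)>1$ together with the elementary bound on the ratio of adjacent binomial coefficients lets one sign these one-step differences; iterating, this reduces to the classical Condorcet Jury Theorem monotonicity (the probability of a correct strict majority among an odd number of i.i.d.\ $\Bernoulli(p)$ voters increases with the number of voters when $p>\tfrac12$), which I would cite or re-derive from the same identity. The main obstacle is exactly the parity bookkeeping in the $L$-direction: because $t_\pm(L+1)-t_\pm(L)$ alternates between $0$ and $1$, adding a single representative can turn an odd electorate into an even one and create a tie that is resolved against the majority, so the even and odd cases of the threshold shift must be tracked explicitly and recombined with care; once that structure is laid out, everything else is routine manipulation of binomial coefficients and the factor $p/(1-p)$.
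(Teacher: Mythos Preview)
Your treatment of the $p$-direction is correct and standard; either the coupling via uniforms or the telescoping derivative works, and both handle $R_+$ and $R_-$ simultaneously.

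The gap is in the $L$-direction: the lemma as stated is \emph{false}, so the proposed argument cannot close. For $p\in(\tfrac12,1)$ compare $L=3$ with $L=4$: one has $R_+(3,p)=1-(1-p)^2$ and $R_+(4,p)=3p^2-2p^3$, hence $R_+(4,p)-R_+(3,p)=-2p(1-p)^2<0$; similarly $R_-(3,p)=p^2>p^3=R_-(4,p)$. You correctly flag parity as the obstacle, but it is not bookkeeping that can be ``recombined with care'': in the step where $m=L-1$ is even the threshold $t_\pm$ increases by~$1$ while only one Bernoulli is added, and the resulting one-step difference is strictly negative for every $p\in(\tfrac12,1)$. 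No pairing of atoms with their mirrors, and no appeal to $p/(1-p)>1$, can sign a quantity that is genuinely negative. What your peeling argument \emph{would} establish is monotonicity along each parity class separately (odd $L$ to odd $L+2$, even $L$ to even $L+2$), which is the classical Condorcet monotonicity after reindexing. The paper relegates its own proof to an appendix not included here, so a direct comparison is not possible; but note that the only invocation of this lemma in the body of the paper is the $p$-monotonicity (in passing from $q$ to $\bar\mu(k)$ in the inhomogeneous-groups theorem), so the defect in the $L$-claim is harmless for the main results even though it is fatal to your proposed proof of it.
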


  \begin{lem}\label{lem:max-rplus}
    For \math{p\le\frac12}, \math{R_+(L,p)\le\frac34}
    (the maximum is attained for \math{L=3, p=\frac12}).
  \end{lem}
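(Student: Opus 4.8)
The plan is to first use monotonicity to reduce to the unbiased case $p=\tfrac12$, and then to reduce the resulting quantity to a standard bound on the central binomial coefficient. By Lemma~\ref{lem:binomial-monotonic}, $R_+(L,p)$ is nondecreasing in $p$, so on $p\le\tfrac12$ it is maximized at $p=\tfrac12$; hence it suffices to show $R_+(L,\tfrac12)\le\tfrac34$ for every $L\ge 2$ (we assume $L\ge2$: for $L=1$ the defining sum is empty, $R_+(1,\cdot)\equiv 1$, and there is no majority to speak of, so $L\ge2$ is the intended range). Writing $m=L-1$ and letting $X\sim\mathrm{Bin}(m,\tfrac12)$, by definition $R_+(L,\tfrac12)=\Prob[X\ge\lceil m/2\rceil]$.

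The main step is a symmetry identity. Since $X$ and $m-X$ have the same law, $\Prob[X\ge\lceil m/2\rceil]=\Prob[X\le\lfloor m/2\rfloor]$; adding these two probabilities and subtracting the overlap gives
\[
R_+(L,\tfrac12)=\tfrac12\left(1+\Prob\!\left[X=\tfrac{m}{2}\right]\right),
\]
where $\Prob[X=\tfrac{m}{2}]$ is read as $0$ when $m$ is odd and equals $\binom{m}{m/2}2^{-m}$ when $m$ is even. Thus $R_+(L,\tfrac12)=\tfrac12$ for every even $L$, while for $L=2k+1$ it equals $\tfrac12+\tfrac12\binom{2k}{k}2^{-2k}$.

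It then remains to verify $\binom{2k}{k}2^{-2k}\le\tfrac12$ for $k\ge1$, which follows from the consecutive ratio $\binom{2k+2}{k+1}2^{-2(k+1)}\big/\binom{2k}{k}2^{-2k}=\tfrac{2k+1}{2k+2}<1$: the sequence is strictly decreasing in $k$ and attains its maximum $\tfrac12$ at $k=1$. Hence $R_+(L,\tfrac12)\le\tfrac34$ in all cases, with equality exactly when $k=1$, i.e.\ $L=3$; and since $R_+(3,p)=2p-p^2$ is strictly increasing on $[0,\tfrac12]$, the monotonicity reduction pins the overall maximizer at $(L,p)=(3,\tfrac12)$. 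I do not expect a genuine obstacle here, as the argument is elementary; the only points needing care are invoking Lemma~\ref{lem:binomial-monotonic} correctly to pass to $p=\tfrac12$, getting the symmetry identity right uniformly across both parities of $m$, and tracking the equality case so the stated maximizer $(L,p)=(3,\tfrac12)$ comes out.
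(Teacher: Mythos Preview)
Your argument is correct for $L\ge 2$: the monotonicity reduction to $p=\tfrac12$ via Lemma~\ref{lem:binomial-monotonic}, the symmetry identity $R_+(L,\tfrac12)=\tfrac12\bigl(1+\Prob[X=m/2]\bigr)$ with $m=L-1$, and the ratio test showing $\binom{2k}{k}4^{-k}$ is maximized at $k=1$ are all clean and yield the bound with the stated equality case $(L,p)=(3,\tfrac12)$. One small slip: for $L=1$ the sum is \emph{not} empty---it consists of the single term $\binom{0}{0}p^0(1-p)^0=1$---so $R_+(1,\cdot)\equiv 1$ for the right reason; your parenthetical should say ``there are $L-1=0$ representatives'' rather than ``the defining sum is empty''. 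Your exclusion of $L=1$ is nonetheless appropriate, since in the paper's application (proof of Theorem~\ref{thm:constant}) the case $L=1$ is handled separately before the lemma is invoked.

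As for comparison with the paper: the proof of Lemma~\ref{lem:max-rplus} is relegated to the appendix, which is not included in the provided source, so a direct comparison is not possible. That said, your route---monotonicity to $p=\tfrac12$, then symmetry of $\mathrm{Bin}(m,\tfrac12)$, then the elementary decrease of the central binomial probability---is the natural and self-contained argument, and there is no indication the paper does anything materially different.
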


  We will prove bounds on $\correct_-(L,p)$ and $\correct_+(L,p)$, for which we will need bounds on
  binomial probabilities and central binomial coefficients. Specifically,
  \begin{lem}[Binomial Tail Inequality]\label{lem:binomial-tail}
    Given \math{p>\frac12}, \math{L} and \math{k\le \ceil{L/2}}, 
    \mand{\choose{L}{k}p^k(1-p)^{L-k}\le \sum_{\ell=0}^k\choose{L}{\ell}
      p^\ell(1-p)^{L-\ell}\le \frac{p}{2p-1}\choose{L}{k}p^k(1-p)^{L-k}.
      }
  \end{lem}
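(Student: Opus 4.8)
The plan is to recognize the partial sum as being dominated by a geometric series. The left inequality is immediate: every term $\choose{L}{\ell}p^\ell(1-p)^{L-\ell}$ is nonnegative and the index $\ell=k$ appears in the sum, so the single term $\choose{L}{k}p^k(1-p)^{L-k}$ is at most the whole sum. All the work is in the right inequality.

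Write $t_\ell=\choose{L}{\ell}p^\ell(1-p)^{L-\ell}$ and compute the ratio of consecutive terms,
\[
\frac{t_{\ell-1}}{t_\ell}=\frac{\ell}{L-\ell+1}\cdot\frac{1-p}{p}.
\]
The key observation is that for every $\ell\le k\le\ceil{L/2}$ we have $2\ell\le L+1$, hence $\ell\le L-\ell+1$ and therefore $\frac{\ell}{L-\ell+1}\le 1$. Since $p>\frac12$ forces $\frac{1-p}{p}<1$, we obtain $t_{\ell-1}\le\frac{1-p}{p}\,t_\ell$ for all $1\le\ell\le k$.

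Iterating this bound downward from index $k$ gives $t_{k-j}\le\bigl(\frac{1-p}{p}\bigr)^{j}t_k$ for $0\le j\le k$. Summing and extending to an infinite geometric series,
\[
\sum_{\ell=0}^{k}t_\ell=\sum_{j=0}^{k}t_{k-j}\le t_k\sum_{j=0}^{\infty}\Bigl(\frac{1-p}{p}\Bigr)^{j}=\frac{p}{2p-1}\,t_k,
\]
which is exactly the claimed upper bound $\frac{p}{2p-1}\choose{L}{k}p^k(1-p)^{L-k}$. I do not anticipate a genuine obstacle here; the only point that needs care is the hypothesis $k\le\ceil{L/2}$, which is precisely what makes the term ratios uniformly bounded by $\frac{1-p}{p}$ across the whole range of summation — without it the ratio can exceed $1$ near the center of the binomial and the geometric domination fails.
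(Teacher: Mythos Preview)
Your proof is correct and is essentially the same argument as the paper's: the paper observes that each term in the sum is at least a factor $p/(1-p)$ larger than the previous one (because $k\le\lceil L/2\rceil$), and then sums the resulting geometric series. You have simply made explicit the ratio computation and the geometric-series bound that the paper leaves as a one-line remark.
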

  \begin{proof}
    The lower bound is just the last term in the sum.
    The upper bound
    comes from the observation that each term in the sum is
    at least a factor \math{\mu/(1-\mu)} bigger than the previous term
    (because \math{k\le \ceil{L/2}}). 
  \end{proof}
  
 \begin{lem}[Near-Central binomial coefficient bound]\label{lem:binomial-central} For \math{L>1},
    \mand{\frac{3}{4}\cdot\frac{4^{L/2}}{\sqrt{\pi L}}
      \le
      \choose{L}{\ceil{\frac12(L-1)}}\le 2\cdot \frac{4^{L/2}}{\sqrt{\pi L}}
    .}
  \end{lem}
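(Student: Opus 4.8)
\textbf{Proof proposal for Lemma~\ref{lem:binomial-central}.} Since $4^{L/2}=2^{L}$, the claim is that the near-central coefficient $\binom{L}{\lceil (L-1)/2\rceil}$ agrees with $2^{L}/\sqrt{\pi L}$ up to the absolute constants $\tfrac34$ and $2$. The plan is to (i) reduce the two parities to a genuine central binomial coefficient, and (ii) apply a non-asymptotic Stirling estimate and check that the resulting numeric inequalities leave room inside $[\tfrac34,2]$.

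First I would split on parity. If $L=2m$ then $\lceil (L-1)/2\rceil=m$, so the quantity is exactly $\binom{2m}{m}$. If $L=2m+1$ (so $L\ge 3$ under $L>1$) then $\lceil (L-1)/2\rceil=m$, and Pascal's rule gives the identity $\binom{2m+2}{m+1}=\binom{2m+1}{m}+\binom{2m+1}{m+1}=2\binom{2m+1}{m}$, hence $\binom{2m+1}{m}=\tfrac12\binom{2m+2}{m+1}=\tfrac12\binom{L+1}{(L+1)/2}$. Thus in both cases the target is (a constant multiple of) a central coefficient $\binom{2M}{M}$ with $2M\in\{L,L+1\}$.

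Next I would invoke a two-sided Stirling bound of Robbins type, $\sqrt{2\pi n}\,(n/e)^{n}\le n!\le \sqrt{2\pi n}\,(n/e)^{n}e^{1/(12n)}$, which yields $\binom{2M}{M}=\frac{4^{M}}{\sqrt{\pi M}}\,e^{r_{2M}-2r_{M}}$ with the error exponent satisfying $-\tfrac16\le -\tfrac1{6M}<r_{2M}-2r_M<0$; consequently $e^{-1/6}\frac{4^{M}}{\sqrt{\pi M}}\le\binom{2M}{M}\le\frac{4^{M}}{\sqrt{\pi M}}$ for all $M\ge1$. Substituting $M=L/2$ in the even case converts this to $e^{-1/6}\sqrt2\cdot\frac{2^{L}}{\sqrt{\pi L}}\le\binom{L}{L/2}\le\sqrt2\cdot\frac{2^{L}}{\sqrt{\pi L}}$, and since $\sqrt2\approx1.414<2$ and $e^{-1/6}\sqrt2\approx1.20>\tfrac34$ the bounds follow immediately. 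In the odd case one gets $\binom{2m+1}{m}=\tfrac12\binom{L+1}{(L+1)/2}$, so the estimate reads $c\cdot\frac{2^{L}}{\sqrt{\pi(L+1)}}$ for $c\in[e^{-1/6}\sqrt2,\sqrt2]$; the upper bound is then clear from $\sqrt2/\sqrt{L+1}\le\sqrt2/\sqrt L<2/\sqrt L$, and the lower bound reduces to $e^{-1/6}\sqrt2\,\sqrt L\ge\tfrac34\sqrt{L+1}$, i.e.\ to the elementary inequality $(2e^{-1/3}-\tfrac9{16})L\ge\tfrac9{16}$, which holds for all $L\ge1$ since $2e^{-1/3}\approx1.43>\tfrac9{16}$.

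There is essentially no deep obstacle here; the only thing to be careful about is the constant bookkeeping in the odd case, where the $\sqrt{L+1}$ (rather than $\sqrt L$) in the denominator slightly erodes the lower-bound margin — but as the computation above shows, the slack is ample for every $L\ge2$, so no separate treatment of small $L$ is needed. If one preferred to avoid the Robbins refinement, the cruder bound $\sqrt{2\pi}\,n^{n+1/2}e^{-n}\le n!\le e\,n^{n+1/2}e^{-n}$ also works, giving constants roughly $0.68\sqrt\pi\approx1.2$ and $0.87\sqrt\pi\approx1.54$ after conversion, again comfortably inside $[\tfrac34,2]$.
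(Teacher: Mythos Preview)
Your argument is correct: the parity split reduces $\binom{L}{\lceil(L-1)/2\rceil}$ to a genuine central binomial coefficient $\binom{2M}{M}$ (via Pascal's rule and symmetry in the odd case), the Robbins form of Stirling gives $e^{-1/(6M)}\le \binom{2M}{M}\sqrt{\pi M}/4^{M}\le 1$, and the remaining numerical checks go through with comfortable slack in both parities for all $L\ge 2$. The only minor blemish is in the optional closing remark about the ``cruder'' Stirling bound, where the quoted constants do not quite match what that bound actually yields (the correct multiplicative range is roughly $[2\pi/e^{2},\,e/\sqrt{2\pi}]\approx[0.85,1.08]$, still inside $[\tfrac34,2]$); this is an aside and does not affect the main proof.

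The paper relegates its proof of this lemma to an appendix that is not included in the source provided, so a direct line-by-line comparison is not possible. That said, the result is a routine two-sided Stirling estimate on the (near-)central binomial coefficient, and your decomposition into parities followed by Robbins' inequality is the canonical route; it is very likely the same argument the authors have in mind, perhaps with slightly different bookkeeping of constants.
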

 
 \begin{lem}[Bounding $\correct_-(L,p)$ and $\correct_+(L,p)$]\label{lem:correct-bounds}
      For \math{\frac12<p<1},
\mand{
        1-\left(\frac{2}{(2p-1)}\right)\cdot\frac{(4p(1-p))^{L/2}}{\sqrt{\pi L}} 
        \le
        \correct_-(L,p)
        \le
        \correct_+(L,p)
        \le
        1-\frac{3}{8p}\cdot\frac{(4p(1-p))^{L/2}}{\sqrt{\pi L}}.}
\end{lem}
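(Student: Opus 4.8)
The plan is to pass to complements and recognize $1-\correct_-(L,p)$ and $1-\correct_+(L,p)$ as lower tails of a Binomial$(L-1,p)$ variable truncated just short of its centre:
\[
1-\correct_-(L,p)=\sum_{\ell=0}^{\floor{L/2}}\choose{L-1}{\ell}p^{\ell}(1-p)^{L-1-\ell},\qquad
1-\correct_+(L,p)=\sum_{\ell=0}^{\floor{L/2}-1}\choose{L-1}{\ell}p^{\ell}(1-p)^{L-1-\ell}.
\]
From these expansions the middle inequality $\correct_-(L,p)\le\correct_+(L,p)$ is immediate, and the two quantities differ by exactly the single summand at $\ell=\floor{L/2}$. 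Since $p>\tfrac12$, the ratio of consecutive summands equals $\tfrac{L-1-\ell}{\ell+1}\cdot\tfrac{p}{1-p}$, which is $\ge\tfrac{p}{1-p}>1$ for every $\ell<\floor{L/2}$; hence the summands increase toward the centre and the right-most one dominates. Moreover $\choose{L-1}{\floor{L/2}}$ and $\choose{L-1}{\floor{L/2}-1}$ are near-central binomial coefficients, which is precisely what Lemma~\ref{lem:binomial-central} controls.

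For the upper bound on $1-\correct_-(L,p)$ (equivalently the lower bound on $\correct_-$), I would apply Lemma~\ref{lem:binomial-tail} with $L\mapsto L-1$ and $k=\floor{L/2}\le\ceil{(L-1)/2}$ to collapse the entire tail into $\tfrac{p}{2p-1}$ times its last term, and then bound that term's binomial coefficient from above by Lemma~\ref{lem:binomial-central}, $\choose{L-1}{\floor{L/2}}\le 2\cdot 4^{(L-1)/2}/\sqrt{\pi(L-1)}$ (using $\choose{m}{j}=\choose{m}{m-j}$ to match $\floor{L/2}$ with the index $\ceil{(L-2)/2}$ appearing in that lemma). Substituting $4^{(L-1)/2}=2^{L-1}$ and rewriting each weight $p^{t}(1-p)^{L-1-t}$ as $(4p(1-p))^{L/2}\cdot 4^{-L/2}$ times a residual bounded power of $p$ or $1-p$, the powers of two cancel up to a constant; absorbing that residue along with $p\le 1$ and $\sqrt{L-1}\ge\sqrt{L}/\sqrt2$ into the constant gives $1-\correct_-(L,p)\le\tfrac{2}{2p-1}\cdot(4p(1-p))^{L/2}/\sqrt{\pi L}$.

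For the lower bound on $1-\correct_+(L,p)$ (equivalently the upper bound on $\correct_+$), it suffices to retain a single summand, the right-most one $\choose{L-1}{\floor{L/2}-1}p^{\floor{L/2}-1}(1-p)^{L-\floor{L/2}}$, since all summands are nonnegative. I would lower-bound its coefficient via Lemma~\ref{lem:binomial-central}, $\choose{L-1}{\floor{L/2}-1}\ge\tfrac34\cdot 4^{(L-1)/2}/\sqrt{\pi(L-1)}$, inserting the elementary extra factor $\choose{L-1}{m-1}\ge\tfrac12\choose{L-1}{m}$ (valid for a central index $m$ once $L\ge 3$) in the one parity case where this index lies a step off centre, and then regroup the powers of $2$ and of $p(1-p)$ exactly as above to obtain $1-\correct_+(L,p)\ge\tfrac{3}{8p}\cdot(4p(1-p))^{L/2}/\sqrt{\pi L}$.

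The main obstacle is the constant-chasing across the parity of $L$: for even $L$ the dominant term sits at one of the two equal central coefficients of $\choose{L-1}{\cdot}$ with a balanced weight $(p(1-p))^{t}$, whereas for odd $L$ it is one step from the unique central coefficient and carries an unbalanced weight with one extra factor of $1-p$, so the explicit constants $2$ and $\tfrac38$ (and the prefactors $\tfrac{p}{2p-1}$ and $\tfrac1p$) must be checked to survive both cases simultaneously after the substitutions $4^{(L-1)/2}=2^{L-1}$ and $(p(1-p))^{t}=(4p(1-p))^{t}/4^{t}$ and the slack between $\sqrt{L-1}$ and $\sqrt{L}$. All of the probabilistic content is already packaged in Lemmas~\ref{lem:binomial-tail} and~\ref{lem:binomial-central}; past that, only careful algebra remains.
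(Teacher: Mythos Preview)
Your proposal is correct and is essentially the paper's own approach: Lemmas~\ref{lem:binomial-tail} and~\ref{lem:binomial-central} are introduced precisely as the two ingredients for Lemma~\ref{lem:correct-bounds}, and you combine them in the intended way---Lemma~\ref{lem:binomial-tail} to collapse each lower tail to (a constant times) its near-central boundary term, and Lemma~\ref{lem:binomial-central} to replace that term's binomial coefficient by $\Theta(4^{L/2}/\sqrt{\pi L})$. Your flagging of the parity case-split and the associated constant-tracking as the only remaining work is exactly right.
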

  The proofs are relegated to the appendix.

We are ready to prove the theorem. Let
\math{c} be defined as in the statement of the theorem.
We may assume \math{n>cK_*} otherwise the theorem automatically holds.
Further, if \math{L*=1}, then there is just one group and any
\math{K>K_*} will also have just one group and be equivalent. Therefore, we
may assume \math{L_*\ge 2}.
  Now suppose \math{K>cK_*}.
  Define
  \math{\mu_K=\mu(K)} and \math{L_K=\floor{n/K}},
  \math{\mu_*=\mu(K_*)} and \math{L_*=\floor{n/K_*}}.
  Observe that \math{L_K\le L_*}.
  We show that \math{\correct(L_*,\mu_*)\ge\correct(
    L_K,\mu_K)}
  which means that \math{K} cannot be better than
  \math{K_*} for a homogeneous partition of \math{n},
  proving the theorem.

  If $\mu_K\le \frac12$ then \math{\correct_+(L_K,\mu_K)\le \frac34}
  (Lemma~\ref{lem:max-rplus}). We show that
  \math{\correct_-(L_*,\mu_*)>\frac34}.
  Indeed, since \math{n>cK_*}, we have \math{n/K_*>c\ge
    \frac{-4}{\ln(1-4\epsilon^2)}\cdot\ln \frac{32}{9\epsilon^2 A}
    }, and so
  \mandc{
    L_*=\floor{\frac{n}{K_*}}\ge
    \frac{n}{2K_*}> \frac{-2}{\ln(1-4\epsilon^2)}\cdot\ln \frac{32}{9\epsilon^2 A}
    \qquad\implies\qquad
    \frac{(1-4\epsilon^2)^{L_*/2}}{\epsilon\sqrt{\pi L_*}}
    <
    \frac{9\epsilon A}{32\sqrt{\pi L_*}}
    <\frac14,
  }
  where in the last inequality we used
  \math{A\le 1} and \math{L_*\ge 2}.
  Now, using the bound for \math{R_-} from Lemma~\ref{lem:correct-bounds},
  we conclude that \math{R_-(L_*,\mu_*)>\frac34}, which proves
  \math{K} cannot be optimal.
  Therefore, we may assume that \math{\mu_K>\frac12}.
\remove{  
  Let $N$ to be sufficiently large such that for all $n>N$, $\correct_-(L_*,\mu_*)>0.75$. If $\mu_K\le 0.5$, then $\correct_+(L_K,\mu_K)\le 0.75<\correct_-(L_*,\mu_*)$, which means that the theorem holds. Therefore, we may assume that $\mu_K>0.5$.}
  Also, if \math{\epsilon=\frac12} then the representatives always vote for the correct decision and the theorem automatically holds, so we may assume
  \math{\epsilon<\frac12}.
  Using Lemma~\ref{lem:correct-bounds},
  \eqan{
    &&\correct_n(\vec H_{K_*},\mu)-\correct_n(\vec H_{K},\mu)\ge \correct_-(L_*,\mu_*)-\correct_+(L_K,\mu_K)]\\[5pt]
    &\ge&
    \frac{3}{8}{\frac{1}{\mu_K}}\cdot\frac{(4\mu_K(1-\mu_K))^{L_K/2}}{\sqrt{\pi L_K}}
    -
    \left(\frac{2}{2\mu_*-1}\right)\cdot\frac{(4\mu_*(1-\mu_*))^{L_*/2}}{\sqrt{\pi L_*}}
    \\
    &{\buildrel {(L_*\ge L_K)}\over\ge}&
     \frac{3}{8}{\frac{1}{\mu_K}}\cdot\frac{(4\mu_K(1-\mu_K))^{L_K/2}}{\sqrt{\pi L_*}}
    -
    \left(\frac{2}{2\mu_*-1}\right)\cdot\frac{(4\mu_*(1-\mu_*))^{L_*/2}}{\sqrt{\pi L_*}}\\
    &\ge&
    (\text{positive})\cdot
    \left[1-
      C
      \left(\frac{(4\mu_*(1-\mu_*))^{L_*/L_K}}{4\mu_K(1-\mu_K)}\right)^{L_K/2}\right],
  }
  where \math{C=\frac{16}{3}\cdot \frac{\mu_K}{2\mu_*-1}
    =\frac{8\mu_K}{3\epsilon}}. Note that
  \math{C\le \frac{8}{3\epsilon}} (because \math{\mu_K<1})
  and \math{C>1} (because \math{\mu_K>\frac12}
  and \math{\epsilon<\frac12}).
  We prove the term in square parentheses is positive.
   Observe that
  \mand{
    \frac{L_*}{L_K}
    =\frac{\floor{n/K_*}}{\floor{n/K}}
    \ge\frac{\floor{n/K_*}}{{n/K}}
    =
    \frac{K}{K_*}\frac{\floor{n/K_*}}{n/K_*}
    \ge
    \frac{K}{2K_*}.
    }
We used \math{\floor{x}/x\ge\frac12} when
  \math{x\ge 1}. Because
  \math{\mu_K>\frac12} and \math{1-\mu_K\ge A/K^\alpha}, we have
  \math{\mu_K(1-\mu_K)\ge A/2K^\alpha}. Also recall that 
  \math{\mu_*\ge\frac12+\epsilon}, which means that $\mu_*(1-\mu_*)\le (\frac 12+\epsilon)(\frac 12-\epsilon)$, and \math{K\le n}. Therefore,
  \eqan{
    C
    \left(\frac{(4\mu_*(1-\mu_*))^{L_*/L_K}}{4\mu_K(1-\mu_K)}\right)^{L_K/2}
    &\le&
    C
    \left(\frac{K^\alpha(1-4\epsilon^2)^{K/2K_*}}{2A}\right)^{L_K/2}
      }
  We show that the RHS is at most 1, or equivalently its logarithm is at most zero,
  concluding the proof. Taking the logarithm of the RHS, we get:
  \eqan{
    &&
    L_K\left(\frac{K}{4K_*}\ln(1-4\epsilon^2)
    +
    \frac\alpha2\ln K
    -
    \frac12\ln 2A\right)+\ln C\\
    &{\buildrel {(L_K\ge 1,  C>1)}\over\le}&
     L_K\left(\frac{K}{4K_*}\ln(1-4\epsilon^2)
    +
   \frac\alpha2\ln K
    -
    \frac12\ln 2A+\ln C\right)\\
    &\le&
    L_K\left(\frac{K}{4K_*}\ln(1-4\epsilon^2)
    +
   \frac\alpha2\left(\ln\frac{-4\alpha K_*/e}{\ln(1-4\epsilon^2)}-
   \frac{\ln(1-4\epsilon^2)}{4\alpha K_*}K\right)
    -
    \frac12\ln 2A+\ln \frac{8}{3\epsilon}\right).
    }
The last step follows by using the  fact for
any \math{z>0}, \math{\ln x\le \ln(z/e)+x/z}, which holds because for any $y=x/z>0$, $\ln y-y+1$ is maximized at $y=1$. In the last step, we set
\math{z=-4\alpha K_*/\ln(1-4\epsilon^2)}.
Collecting terms,
we have
\eqan{
&&L_K\left(\frac{K}{8K_*}\ln(1-4\epsilon^2)
    +
   \frac\alpha2\ln\frac{-4\alpha K_*/e}{\ln(1-4\epsilon^2)}
    -
    \frac12\ln 2A+\ln \frac{8}{3\epsilon}\right)\\
    &=&
\frac{L_K   \ln(1-4\epsilon^2)}{8} \left(\frac{K}{K_*}-c
\right)\le0
}
where the last step follows because
\math{K/K_*>c} and \math{\ln(1-4\epsilon^2)<0}.
 \end{proof}
 
As a corollary, we will show that Theorem~\ref{thm:constant} can be applied to any $F$ with continuous density function, which means that in such cases the optimal number of groups with homogeneous size is $\Omega(n)$. To this end, we prove a lemma stating that any {\sc noisy-max} group competence function (Example~\ref{example:uniform-noisy}) is polynomially bounded away from 1.

\begin{lem}\label{lem:max-model-gen}
  For any {\sc noisy-max} group competence function $\mu_{\vec p}$ and any $F$ with a continuous density \math{f}
  on \math{[0,1]}, we have 
  \math{\mu_{\vec p}(K)\le 1-(\frac1{2\max(f)})/K}. Moreover, the max representative democracy $\mu_{\max}$ is concave, which means that \math{\mu_{\max}(K)+\mu_{\max}(K+2)\le 2\mu_{\max}(K+1)} for all \math{K\ge 1}.
\end{lem}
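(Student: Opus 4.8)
The plan is to derive both parts from the tail-integral representation $\mu_{\max}(K)=\Exp_{X_1,\dots,X_K\sim F}[\max_i X_i]=\int_0^1(1-F(t)^K)\dd t=1-\int_0^1 F(t)^K\dd t$ of the expected maximum of $K$ i.i.d.\ draws on $[0,1]$, where $F$ is the CDF. For the first claim I would first reduce {\sc noisy-max} to {\sc max}: since $q_{\ell,1}=\max\{q_{\ell,1},\dots,q_{\ell,K}\}\ge q_{\ell,i}$ holds for every realization and every $i\le K$, we get $\Exp[q_{\ell,1}]\ge\Exp[q_{\ell,i}]$, and because $\sum_{i=1}^K p_i=1$ with $p_i\ge 0$ this yields $\mu_{\vec p}(K)=\sum_{i=1}^K p_i\,\Exp[q_{\ell,i}]\le\Exp[q_{\ell,1}]=\mu_{\max}(K)$. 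Hence it suffices to show $\mu_{\max}(K)\le 1-\frac{1}{2\max(f)}\cdot\frac1K$, which by the representation above is equivalent to the lower bound $\int_0^1 F(t)^K\dd t\ge\frac{1}{2\max(f)}\cdot\frac1K$.

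To obtain that bound, write $M=\max(f)$ and observe $M\ge 1$, since $\int_0^1 f=1$. The estimate $1-F(t)=\int_t^1 f(s)\dd s\le M(1-t)$ gives $F(t)\ge 1-M(1-t)\ge 0$ on $[1-\frac1M,1]$, so restricting the integral to that subinterval and substituting $u=1-M(1-t)$ gives $\int_0^1 F(t)^K\dd t\ge\frac1M\int_0^1 u^K\dd u=\frac1{M(K+1)}\ge\frac1{2MK}$, using $K+1\le 2K$; this proves the first part. For the concavity claim I would apply the same representation: $2\mu_{\max}(K+1)-\mu_{\max}(K)-\mu_{\max}(K+2)=\int_0^1\big(F(t)^K+F(t)^{K+2}-2F(t)^{K+1}\big)\dd t=\int_0^1 F(t)^K(1-F(t))^2\dd t\ge 0$, the integrand being nonnegative pointwise; this argument uses only that the draws lie in $[0,1]$, not continuity of $f$.

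The lemma is elementary, so I do not anticipate a genuine obstacle. The only points needing care are verifying $\max(f)\ge 1$ so that $[1-1/M,1]$ is a nondegenerate interval on which the linear lower bound $F(t)\ge 1-M(1-t)$ stays nonnegative, and keeping track of the $K$ versus $K+1$ slack that is responsible for the constant $\frac12$ in the statement.
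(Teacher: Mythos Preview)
Your proposal is correct and follows essentially the same route as the paper: reduce {\sc noisy-max} to {\sc max}, use the tail-integral identity $\mu_{\max}(K)=1-\int_0^1 F(t)^K\,dt$, lower-bound $F(t)$ by the linear function $1-M(1-t)$ near $t=1$, and integrate; the concavity argument is identical. The only cosmetic differences are that the paper derives the tail-integral formula via integration by parts and invokes Taylor's theorem (mean value form) for the linear lower bound on $F$, whereas you quote the tail formula directly and obtain the same linear bound from $1-F(t)=\int_t^1 f\le M(1-t)$.
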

\begin{proof} It suffices to prove the inequality for $\mu_{\max}$ because expectation of any other order statistics is no more than it.
  Since \math{f} is continuous on the compact set \math{[0,1]}, it attains
  a maximum \math{B=\max(f)}. We note that since \math{f} is a density on \math{[0,1]}, \math{B\ge 1}. Let \math{F} be the CDF for \math{f}.
  We have
  \eqan{
    \mu_{\max}(K)&=&
    K\int_{0}^1 s\ f(S)F(s)^{K-1} ds=\int_{0}^1\ s (F(s)^{K})' ds=\left.s F(s)^{K}\right|_{0}^1-\int_{0}^1\ F(s)^{K} ds
  }
  Since \math{f(s)\le B} is continuous, \math{F(s)} is
  differentiable and 
  \math{F'(s)\le B}. Therefore, using  Taylor's theorem at \math{s=1},
  \math{F(s)=1-(1-s)F'(t)} for some \math{t\in[s,1]}, and since
  \math{F'(t)\le B}, we have
  \math{F(s)\ge \max(0,1-B(1-s))}. Therefore,
  \eqan{
    \mu_{\max}(K)&\le&
    1-\int_{0}^1ds\ \max(0,1-B(1-s))^{K}=1-\int_{1-1/B}^1ds\ (1-B(1-s))^{K}\\
    &=&1-\frac{1}{B(K+1)}\buildrel{(K+1\le 2K)}\over\le 1-({\textstyle\frac1{2B}})/K.
  }
Concavity of $\mu_{\max}$ holds because for any $K\ge 1$, we have 
\begin{align*}
\mu_{\max}(K)+\mu_{\max}(K+2)-2\mu_{\max}(K+1)=&-\int_{0}^1\ \left[F(s)^{K}+F(s)^{K+2}-2F(s)^{K+1} \right]ds\\
=&-\int_{0}^1\ F(s)^{K} (1-F(s))^2ds\le 0
\end{align*}
\end{proof}
Combining Theorem~\ref{thm:constant} and Lemma~\ref{lem:max-model-gen}, and notice that $\mu_{\max}$ provides an upper bound on the competence of the representative chosen by {\em any} representative selection process $\mu$, we have the following surprising corollary.

\begin{coro}[Constant group size]\label{cor:constant}
Suppose $F$ has a continuous density function on $[0,1]$. For any $\mu$ such that $\mu(K_*)>0.5$ for some $K_*$. The optimal number of representatives for homogeneous groups is at least $\frac{n}{cK_*}$, where $c$ is the constant defined in Theorem~\ref{thm:constant}.
\end{coro}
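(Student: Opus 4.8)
The plan is to obtain this corollary by verifying that every $\mu$ of the stated form satisfies the hypotheses of Theorem~\ref{thm:constant}, and then translating its conclusion on the optimal homogeneous group size into a lower bound on the optimal number of representatives. There is no new argument here; the content lies in the two reductions, which I carry out as follows.

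First I would establish the polynomial-boundedness hypothesis $\mu(K)\le 1-A/K^{\alpha}$. The key observation is that no representative selection process can select a representative whose expected competence exceeds the expected competence of the most competent member of the group, so $\mu(K)\le\mu_{\max}(K)$ for every group size $K$, where $\mu_{\max}$ is the {\sc max} process of Example~\ref{example:uniform-max}. Since $F$ has a continuous density $f$ on $[0,1]$, Lemma~\ref{lem:max-model-gen} gives $\mu_{\max}(K)\le 1-\bigl(\tfrac{1}{2\max(f)}\bigr)/K$, with $\max(f)\ge 1$ because $f$ integrates to $1$ on $[0,1]$. Hence $\mu(K)\le 1-A/K$ for all $K$ with $A=\tfrac{1}{2\max(f)}\in(0,\tfrac12]$ (so in particular $A\le 1$) and $\alpha=1$, which is exactly the second hypothesis of Theorem~\ref{thm:constant}.

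Next I would handle the competence-gap hypothesis: given $\mu(K_*)>\tfrac12$, set $\epsilon=\mu(K_*)-\tfrac12>0$, and note $\epsilon<\tfrac12$ because $\mu(K_*)\le\mu_{\max}(K_*)\le 1-A/K_*<1$ — this also rules out the degenerate case $\mu(K_*)=1$ that Theorem~\ref{thm:constant} treats separately. With $\cost{L}$ a constant, Theorem~\ref{thm:constant} then applies with these $\epsilon, A$ and $\alpha=1$ and gives $K^*_{\mathrm{hom}}(n)\le cK_*$ for the constant $c$ in its statement (which simplifies in the $\alpha=1$ case). Since a homogeneous partition with group size $K$ uses $\floor{n/K}$ representatives, the optimal number of representatives is therefore $\floor{n/K^*_{\mathrm{hom}}(n)}\ge\floor{n/(cK_*)}=\Omega(n)$, as claimed. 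I expect the only delicate point to be making the domination $\mu\le\mu_{\max}$ precise — it rests on interpreting $\mu(K)$ as the expectation of the (possibly randomly) selected representative's vote and on $\mu_{\max}$ being the pointwise-largest such expectation; the remaining steps are bookkeeping.
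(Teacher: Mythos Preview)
Your proposal is correct and follows exactly the paper's route: the corollary is obtained by combining Lemma~\ref{lem:max-model-gen} (which gives the polynomial bound $\mu_{\max}(K)\le 1-\tfrac{1}{2\max(f)}/K$) with the domination $\mu\le\mu_{\max}$ and then invoking Theorem~\ref{thm:constant}. Your write-up is in fact more careful than the paper's one-line justification, since you explicitly verify that $A\le 1$ and $\epsilon<\tfrac12$ so that Theorem~\ref{thm:constant} applies cleanly.
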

Therefore, the  representative democracy with fixed group size makes better choices than the  representative democracy with fixed number of representatives. More discussions can be found in the next subsection.

One limitation of Theorem~\ref{thm:constant} is that it only holds for homogeneous group size. 
Next, we will extend the theorem to inhomogeneous groups, allowing groups to have different sizes, at the cost of further requiring that the group competence function
\math{\mu(K)} to be concave.
Let $L^*(n)$ denote the optimal number of groups for $n$ voters.
\begin{thm}[Optimal number of representatives for general group sizes]\label{thm:oneissuegeneral} Suppose $\cost{L}$ is a constant, there exists $K_*\in\mathbb N$ such that
  $\mu(K_*)\ge\frac12+\epsilon$ for \math{0<\epsilon<\frac12}, $\mu$ is concave, and \math{\mu(K)\le 1-A/K^{\alpha}} for constant \math{A<1} and $\alpha \ge 0$. Then, for any \math{n}, $L^*(n)\ge \floor{\frac{n}{K_*}}/c$, where   
\mand{
        c=\frac{-4}{\ln(1-4\epsilon^2)}\left(        
    \ln \frac{32}{9\epsilon^2 A}+
    \alpha\ln\frac{4\alpha K_*}{e}-\alpha\ln|\ln(1-4\epsilon^2)|\right)
        .}
\end{thm}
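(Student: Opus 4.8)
The plan is to reduce the inhomogeneous case to the homogeneous one by showing that any partition with "too few" groups can be beaten by the homogeneous partition $\vec H_{K_*}$, and then invoke the machinery (Lemmas~\ref{lem:binomial-monotonic}--\ref{lem:correct-bounds}) already developed in the proof of Theorem~\ref{thm:constant}. Concretely, suppose for contradiction that the optimal partition $\vec K = (K_1,\ldots,K_L)$ has $L < \floor{n/K_*}/c$. First I would bound $\correct_n(\vec K,\mu)$ from above using a "Poisson-trial" analogue of the binomial tail bounds: the vote tally is a sum of $L$ independent Bernoullis with means $\mu(K_1),\ldots,\mu(K_L)$, and concavity of $\mu$ together with $\sum_i K_i = n$ lets me compare this to a well-chosen comparison case. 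The natural move is to use concavity to argue $\frac{1}{L}\sum_i \mu(K_i) \le \mu(n/L)$ by Jensen, so the average representative competence is at most that of the homogeneous partition into $L$ groups of size $n/L$; and since $L$ is small, $n/L$ is large, so each such representative is very competent — but there are very few of them.

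Next I would show that having few highly-competent representatives is worse than having many moderately-competent ones, via the tail bounds. Using a Poisson-trial version of Lemma~\ref{lem:correct-bounds} (which follows from Hoeffding/Chernoff-type reasoning, or from the binomial bounds applied after noting the tally stochastically dominates an appropriate binomial when all means exceed $\tfrac12+\epsilon$ — if some $\mu(K_i) \le \tfrac12$ the analysis of Theorem~\ref{thm:constant} showing $\correct_n(\vec H_{K_*},\mu) > \tfrac34 \ge \correct_+$ carries over directly), I get $\correct_n(\vec K,\mu) \le 1 - (\text{const}) \cdot \frac{(4\bar\mu(1-\bar\mu))^{L/2}}{\sqrt{\pi L}}$ where $\bar\mu$ is the (geometric or arithmetic) mean competence. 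Meanwhile $\correct_n(\vec H_{K_*},\mu) \ge 1 - \frac{2}{2\mu_*-1}\cdot\frac{(4\mu_*(1-\mu_*))^{L_*/2}}{\sqrt{\pi L_*}}$ with $L_* = \floor{n/K_*}$. Since $L_* > cL$, the decay in the $\vec H_{K_*}$ error term wins: I would redo essentially the same logarithmic computation as at the end of the proof of Theorem~\ref{thm:constant}, with $K/K_*$ replaced by $L_*/L > c$, to conclude $\correct_n(\vec H_{K_*},\mu) > \correct_n(\vec K,\mu)$, contradicting optimality of $\vec K$. This forces $L^*(n) \ge \floor{n/K_*}/c$.

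The main obstacle is getting a clean upper bound on $\correct_n(\vec K,\mu)$ for a genuinely inhomogeneous Poisson trial — the binomial tail inequalities of Lemma~\ref{lem:binomial-tail} and the near-central coefficient bound of Lemma~\ref{lem:binomial-central} are stated for a single success probability. I expect to need either (i) a convexity/majorization argument that the left-tail probability $\Prob(S < \tfrac12)$ for a Poisson trial with mean vector $(\mu(K_1),\ldots,\mu(K_L))$ is maximized, subject to the mean constraint, when all coordinates are equal — which is exactly where concavity of $\mu$ enters, since it controls how the constraint $\sum K_i = n$ translates into a constraint on $\sum \mu(K_i)$ — or (ii) a direct Chernoff bound $\Prob(S \le \tfrac12) \le \exp(-L \cdot D(\tfrac12 \,\|\, \bar p))$ with $\bar p$ the arithmetic mean, which sidesteps majorization but loses the polynomial $\sqrt{\pi L}$ factor (harmless here). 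Route (ii) is the safer bet; the only care needed is handling the case where some groups are tiny (competence near $\tfrac12$ or below), which is dispatched exactly as in Theorem~\ref{thm:constant} by the crude $\correct_+ \le \tfrac34$ bound. Everything else is a rerun of the already-completed estimate with $K/K_*$ swapped for $L_*/L$.
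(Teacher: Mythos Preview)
Your overall architecture is right and matches the paper: reduce the inhomogeneous Poisson-binomial to a homogeneous binomial via a comparison argument, then rerun the estimate from Theorem~\ref{thm:constant} with $K/K_*$ replaced by (essentially) $L_*/L$. But both of your proposed routes for the comparison step point in the wrong direction.

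Route~(ii) cannot work: the Chernoff bound $\Prob(S\le\tfrac12)\le\exp\bigl(-L\,D(\tfrac12\,\|\,\bar p)\bigr)$ is an \emph{upper} bound on the failure probability, hence a \emph{lower} bound on $\correct_n(\vec K,\mu)$. You need the opposite---an upper bound on $\correct_n(\vec K,\mu)$, equivalently a \emph{lower} bound on $\Prob(S\le\tfrac12)$, i.e.\ an anti-concentration statement. Chernoff gives concentration, not anti-concentration.

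Route~(i) is the right idea but you have stated the inequality backwards. The correct fact (Hoeffding; invoked in the paper via~\citet{Fey03:Note}) is that for a Poisson-binomial with average success probability $q>\tfrac12$, the majority probability is \emph{maximized} when all coordinates equal $q$---equivalently, the left tail $\Prob(S\le\tfrac12)$ is \emph{minimized} at homogeneity, not maximized. This yields $\correct_n(\vec K,\mu)\le \bar R(L,q)\le \bar R(L,\bar\mu(n/L))$, which is exactly the upper bound you need (the second inequality uses your Jensen step plus monotonicity of $\bar R(L,\cdot)$; the paper makes $\mu(n/L)$ meaningful for non-integer $n/L$ by piecewise-linear interpolation $\bar\mu$). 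With the direction corrected, this is precisely how the paper proceeds.

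The remaining gap is the low-competence case. Your parenthetical ``if some $\mu(K_i)\le\tfrac12$'' is not the right trigger, and the bound $\correct_+\le\tfrac34$ from Lemma~\ref{lem:max-rplus} does not carry over: that lemma is for a single-parameter binomial, and the Fey/Hoeffding comparison that would reduce you to a binomial is only valid when the \emph{average} $q>\tfrac12$. When $q\le\tfrac12$ the paper invokes a separate, nontrivial result (Lemma~\ref{lem:poisson-binomial}) bounding the Poisson-binomial majority probability by $1/\sqrt{2}$ directly, and then shows $\correct_-(L_*,\mu_*)>1/\sqrt{2}$. You will need something of this kind; it does not follow from the Theorem~\ref{thm:constant} machinery alone.
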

\begin{proof}
  We show how to modify the proof for Theorem~\ref{thm:constant}. Let $R(\vec K_*,\mu)$ denote the probability that the majority
vote among the \math{L_*} representatives is correct under partition $\vec K_*$.
First, we may assume that
\math{\floor{n/K_*}>c}, because otherwise the
RHS in the bound is less than 1 and the theorem automatically holds.

Our key technique is to allow group size to be non-integers by extending $\mu$ to a piece-wise linear function $\bar \mu:{\mathbb R}_{\ge 0}\rightarrow [0,1]$. Then, we show that for any given number of groups $L$, $R(\vec K,\mu)$ is maximized when $\vec K$ partitions $n$ into $L$ equal-size groups, each with $\frac{n}{L}$ voters. This is guaranteed by concavity of $\mu$ and a result by~\citet{Fey03:Note}.

More precisely, let $\bar \mu:{\mathbb R}_{\ge 0}\rightarrow [0,1]$ be
the piecewise linear function that interpolates $\mu$ at the integers. It follows that $\bar\mu$ is concave and bounded above by $1-A/k^\alpha$ for all $k\ge 1$. For any $L\in \mathbb N$, we let $k=n/L$ and let $\bar R(L,p)$ denote the probability to obtain
a  majority of successes in $L$ Bernoulli trials, where each trial succeeds with probability $p$. The following Lemma gives an upper bound on the probability for majority voting to succeed, when the average competence is at most $0.5$. This answers an open question in~\cite[Lemma 5]{Owen89:Proving} and~\citep{Fey03:Note}.
\begin{lem}[Poisson-Binomial majority]\label{lem:poisson-binomial}
  A Poisson-Binomial random variable \math{X=x_1+\cdots+x_n} is a sum of
  \math{n}
  independent Bernoulli trials \math{x_i} 
  with respective (possibly different) success-probabilities
  \math{p_i}. Let \math{q} be the average probability,
  \math{nq=\sum_{i}p_i} and suppose
  \math{q\le\frac12}. Then, the probability to get a strict majority
  of successful trials is bounded by a constant. Specifically,
  \begin{enumerate}\itemsep-3pt\vspace*{-4pt}
  \item
    If \math{n=2K} is even: 
    \mldc{\Prob[X>n/2]\le1-B,\quad\text{ where }\quad
      B=\sum_{i=0}^{\floor{n/2}}\choose{n}{i}q^i(1-q)^{n-i}\le \frac 12\label{eq:maj-even}}
  \item
    If \math{n=2K+1} is odd and \math{n\ge1/(1-2q)} then
    \r{eq:maj-even} holds. Otherwise,
    \mandc{\Prob[X>n/2]\le\sqrt{1-B},\quad\text{ where }\quad
      B=\sum_{i=0}^{n}\choose{2n}{i}q^i(1-q)^{2n-i}\le \frac12\label{eq:maj-odd}}
      \end{enumerate}
\end{lem}
The bounds are in terms
of a standard Binomial with probability \math{q} equal to the average
probability in the ensemble. The bound for even
\math{n} is tight (set all probabilities to \math{q}).
The boundary case which we alluded to earlier is for odd
\math{n<1/(1-2q)}.
We conjecture that the bound for odd \math{n} can be improved
(for \math{q=\frac12}, we believe
\math{\Prob[\text{majority}]\le 1/\sqrt{e}},
which is better than \math{1/\sqrt{2}}).
It is interesting that the probability of a majority can
depend so much on whether the number of voters is even or odd,
and this is true even asymptotically (for \math{q=\frac12}, set
\math{K} of the \math{p_i} to \math{0} and \math{K+1} of the \math{p_i} to
\math{\frac{2K+1}{2K+2}}, then
\math{\Prob[\text{majority}]=(\frac{2K+1}{2K+2})^{K+1}\rightarrow 1/\sqrt{e}}). The proof is relegated to the appendix.

Continuing with the proof of Theorem~\ref{thm:oneissuegeneral}, consider a partition $\vec K(n)$ with $|\vec K(n)|=L$ and define the average 
competence \math{q} of the representatives by $Lq=\sum_{i=1}^L\mu([K(n)]_i)$.
If \math{q\le\frac12}, then by
Lemma~\ref{lem:poisson-binomial},
\math{R(\vec K,\mu)<\frac{1}{\sqrt 2}}.
Using the bound for \math{\correct_-} in
Lemma~\ref{lem:correct-bounds}, with 
\math{\floor{n/K_*}>\frac{-4}{\ln(1-4\epsilon^2)}(\frac\alpha2\ln n
    +\ln \frac{4\sqrt{2}}{3\epsilon\sqrt A})}, we have 
$R(\vec K_*,\mu)>\frac{1}{\sqrt 2}$ and so \math{\vec K(n)} cannot be
optimal. Therefore, we may assume that \math{q>\frac12}.

By concavity of $\bar \mu$, for any $\vec K(n)$ with
$|\vec K(n)|=L$, we have $\bar\mu(k)=\bar\mu(n/L)\ge \sum_{i=1}^L\frac1{L}\mu([K(n)]_i)=q>\frac12.$
By~\citep{Fey03:Note}, since \math{q>\frac12}, the probability of a majority is
maximized when all the probabilities are the same as the average value \math{q}. And by
monotonicity, $\bar R(L,\cdot)$ can only increase in going from
\math{q} to \math{\bar\mu(k)}.
Therefore, $\bar\correct(L,\bar\mu(k))\ge \correct(\vec K,\mu)$.
What we have established is that if you fix the number of
groups, then we may assume that the probability to get a majority cannot
be better than the outcome where all group sizes are
homogeneous (possibly non-integer), equal to \math{n/L}. This gives us an upper
bound on the quality of the number of groups \math{L}.

The rest of the proof
is to show that if \math{L} is smaller than the claim of the theorem,
then this upper bound on the quality is less than the choosing
\math{L_*} equal to the bound in the Theorem, therefore
proving that \math{L} is not optimal.
This part of the proof
proceeds in the same way as the proof of Theorem~\ref{thm:constant}, by replacing $K$ by $k$ and replacing $\mu(K)$ by $\bar \mu(k)$. Lemma~\ref{lem:binomial-monotonic}--\ref{lem:correct-bounds} still hold because they do not depend on $k$. 
We note that the key steps involving $K$ in the proof of Theorem~\ref{thm:constant} are $\mu(K)\ge 0.5$ and  $1-\mu(K)\ge A/K^\alpha$, which hold for $k$ because we focus on $L$ such that $\mu(n/L)>0.5$ and $\bar\mu(k)\le 1-A/k^\alpha$.
\end{proof}


Next, we further extend Theorem~\ref{thm:oneissuegeneral} by showing that the optimal partitioning function $\vec K$ must be 
nearly homogeneous. This property comes at the cost of requiring that
\math{\mu} is 
log-concave and \math{1-\mu} is log-convex. Log-concavity is
a weaker assumption than concavity. 
Let us consider an example.
\begin{ex}[Log-concavity and log-convexity of $\mu_{\max}$]\rm
  For uniformly distributed voters $F=\text{Uniform}[0,1]$ as in Example~\ref{example:uniform-uniform},
  the {\sc max} group competence function \math{\mu_{\max}(K)=K/(K+1)} is log-concave (it is also
  concave). 
  Log-convexity of \math{1-\mu_{\max}} holds because
  \math{1-\mu_{\max}(K)=1/(K+1)} and \math{(K+2)^2>(K+1)(K+3)}, therefore
  \math{(1-\mu_{\max}(K))(1-\mu_{\max}(K+2))>(1-\mu_{\max}(K+1))^2}. 
  \hfill$\blacksquare$
  \end{ex}
\begin{thm}[Near-Homogeneity of group sizes]\label{thm:groups-homogeneous}
  Suppose $\cost{L}$ is a constant, the group competence function \math{\mu(K)} is log-concave
  and non-decreasing, and further that
  \math{1-\mu(K)} is log-convex. Given \math{L} groups, there is an optimal
  partition \math{(K_1,\ldots,K_L)} of \math{n} into the \math{L} groups
  with no two groups differing in
  size by more than 1. That is, \math{\max_{i}K_i-\min_iK_i\le 1}.
\end{thm}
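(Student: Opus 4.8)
The plan is a local-exchange argument together with a monovariant. Fix $L$ and assume $n\ge L$ (otherwise there is no partition and nothing to prove). Since there are only finitely many partitions of $n$ into $L$ positive parts, an optimal one—maximizing $R_n(\vec K,\mu)$—exists; take any such optimal $\vec K$. I will show that whenever two parts satisfy $K_i\ge K_j+2$, moving one voter from group $i$ to group $j$ keeps the partition optimal while strictly decreasing the potential $\Phi(\vec K):=\sum_\ell K_\ell^2$. As $\Phi$ is a nonnegative integer, iterating this must terminate at an optimal partition with $\max_\ell K_\ell-\min_\ell K_\ell\le 1$.

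So suppose $K_i\ge K_j+2$ and let $\vec K'$ replace $K_i$ by $K_i-1$ and $K_j$ by $K_j+1$; all parts of $\vec K'$ are still $\ge1$ since $K_i-1\ge K_j+1\ge 2$, and $\vec K'$ still has $L$ groups. Write $p=\mu(K_i)$, $p'=\mu(K_i-1)$, $r=\mu(K_j)$, $r'=\mu(K_j+1)$. The crux is the claim $R_n(\vec K',\mu)\ge R_n(\vec K,\mu)$. To prove it, condition on the votes of the $L-2$ representatives of the groups other than $i$ and $j$; by independence their joint distribution is identical under $\vec K$ and $\vec K'$. If $m$ of them vote $1$, a strict majority requires $t:=\lfloor L/2\rfloor+1-m$ further $1$'s from groups $i,j$, so the conditional success probability equals $1$ if $t\le 0$, equals $0$ if $t\ge 3$, equals $1-(1-p)(1-r)$ if $t=1$, and equals $pr$ if $t=2$. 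Only the two middle values of $m$ are affected by the swap, hence
\[
R_n(\vec K',\mu)-R_n(\vec K,\mu)=\Prob[t=1]\cdot\delta_1+\Prob[t=2]\cdot\delta_2,
\]
where $\delta_1=(1-p)(1-r)-(1-p')(1-r')$ and $\delta_2=p'r'-pr$, so it suffices to prove $\delta_1\ge0$ and $\delta_2\ge0$.

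For $\delta_2$: the degenerate case where some factor vanishes is immediate since $0\le\mu\le1$, so assume all factors positive; then $\delta_2\ge0\iff \mu(K_j+1)/\mu(K_j)\ge\mu(K_i)/\mu(K_i-1)$, which holds because log-concavity of $\mu$ makes $m\mapsto\mu(m+1)/\mu(m)$ non-increasing while $K_j\le K_i-2<K_i-1$. For $\delta_1$: assume $1-\mu$ is positive at the relevant points; then $\delta_1\ge0\iff (1-\mu(K_i))/(1-\mu(K_i-1))\ge(1-\mu(K_j+1))/(1-\mu(K_j))$, which holds because log-convexity of $1-\mu$ makes $m\mapsto(1-\mu(m))/(1-\mu(m-1))$ non-decreasing while $K_j+1\le K_i-1<K_i$; monotonicity of $\mu$ ensures the vanishing-factor cases again make $\delta_1,\delta_2$ trivially nonnegative. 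This proves the claim, so $\vec K'$ is optimal; and $\Phi(\vec K')-\Phi(\vec K)=2(K_j-K_i+1)\le-2<0$, which closes the monovariant argument.

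The conditioning decomposition and the termination via $\Phi$ are routine. The real obstacle—and the reason the two unusual hypotheses appear—is matching each regime to the right convexity property: the ``both representatives must be correct'' regime ($t=2$, a product $pr$) is governed by log-concavity of $\mu$, whereas the ``at least one must be correct'' regime ($t=1$, a complementary product $(1-p)(1-r)$) is governed by log-convexity of $1-\mu$; one must also verify that in each case the monotone-ratio comparison points in the direction making the swap helpful rather than harmful (which is exactly why the larger group is shrunk and the smaller one grown).
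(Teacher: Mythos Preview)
Your argument is correct and is essentially the paper's own proof: the paper likewise conditions on the votes of the $L-2$ ``other'' representatives, obtains the expression $\mu_i\mu_j\,f(M-2)+\bigl(1-(1-\mu_i)(1-\mu_j)\bigr)f(M-1)+Q(M)$, and then applies log-concavity of $\mu$ to the first term and log-convexity of $1-\mu$ to the second to show that moving a voter from the largest to the smallest group cannot decrease $P$. Your use of the potential $\Phi(\vec K)=\sum_\ell K_\ell^2$ to guarantee termination and your explicit handling of the degenerate (vanishing-factor) cases are welcome bits of hygiene that the paper leaves implicit, but they do not change the underlying approach.
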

\begin{proof}
  Suppose we have a partition of \math{n} voters
  into \math{L} groups with sizes
  \math{K_1\le K_2\le\cdots\le K_L}, and suppose that \math{K_L-K_1\ge 2}.
  Let \math{\mu_1=\mu(K_1)} and \math{\mu_L=\mu(K_L)}.
  Let the groups \math{K_2,\ldots,K_{L-1}} yield an arbitrary ensemble
  \math{A}
  of \math{L-2} representatives which is fixed in this proof.
  Define the functions
  \eqan{
    f(k)&=&\Prob[\text{\math{k} successes in \math{A}}]\\
    Q(k)&=&\Prob[\text{at least \math{k} successes in \math{A}}]=
    f(k)+f(k+1)+\cdots+f(L-2).
  }
  Let \math{M=\ceil{(L+1)/2}} be the majoirty threshold for the \math{L}
  representatives and define \math{P} as the probability that a majority of
  successes is obtained from all the representatives. Conditioning on the
  votes of representatives \math{1} and \math{L}, we have,
  \eqan{
    P&=&\mu_1\mu_L Q(M-2)+(\mu_1(1-\mu_L)+\mu_L(1-\mu_1))Q(M-1)+(1-\mu_1)(1-\mu_L)
    Q(M)\\
    &=&
    \mu_1\mu_L(Q(M-2)+Q(M)-2Q(M-1))+(\mu_1+\mu_L)(Q(M-1)-Q(M))+Q(M)\\
    &=&
    \mu_1\mu_L (f(M-2)-f(M-1))+(\mu_1+\mu_L)f(M-1)+Q(M)\\
    &=&
    \mu_1\mu_L f(M-2)+(\mu_1+\mu_L-\mu_1\mu_L)f(M-1)+Q(M)\\
    &=&
    \mu_1\mu_L f(M-2)+(1-(1-\mu_1)(1-\mu_L))f(M-1)+Q(M)
  }
  We now consider the partition of the \math{n} voters obtained by keeping the
  ensemble \math{A} fixed, increasing \math{K_1} to \math{K_1+1} and decreasing
  \math{K_L} to \math{K_L-1}. Let \math{P'} be the probability of a majority
  for the new representatives. The only difference with
  \math{P} is that \math{\mu_1\rightarrow \mu(K_1+1)} and
  \math{\mu_L\rightarrow\mu(K_L-1)}. By log-concavity of \math{\mu},
  \math{\mu(K_1+1)\mu(K_L-1)\ge \mu_1\mu_L}. By log-convexity of
  \math{1-\mu},
  \math{(1-\mu(K_1+1))(1-\mu(K_L-1))\le (1-\mu_1)(1-\mu_L)}. Therefore
  \math{P'\ge P}.
  As long as the difference in largest and smallest sizes is at least 2,
  we can continue merging without decreasing \math{P}, proving the theorem.
\end{proof}

Theorem~\ref{thm:groups-homogeneous} also applies to non-constant cost functions because the number of groups is fixed.

\subsubsection{Numerical Example of Optimal Group Size}
As an application of our results, we consider
the uniform voters with {\sc max} representative selection process (Example~\ref{example:uniform-max}) applied to US House.
Below, we show how the
optimal homogeneous group size \math{K^*} and the minimum group size required
to achieve consistency \math{K_*} depend on the upper bound on a voter's
probability of being correct,~\math{b}. We fix \math{a}, the lower bound,
to \math{0.45}, so \math{F=\text{Uniform}[0.45,b]}.
\begin{figure}[htp]
\centering\includegraphics[trim=0cm 0cm 0cm 0, clip=true, width=.4\textwidth]{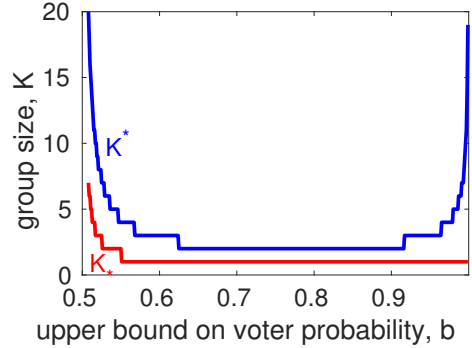}
\caption{Optimal group size.}
\end{figure}
When $b$ is small, as voters get wiser ($b$ goes up), \math{K_*} and \math{K^*} are decreasing.
At some point, even the direct-democracy (\math{K_*=1}) is consistent.
For very large \math{b}, the optimal group size starts
to increase due to the \math{\mu(1-\mu)} term.
In general, the optimal group size is less than about 5, the size of the
typical household. The optimal representative democracy is obtained when
each household elects a head to vote on its behalf.

The optimal group size is useful to know, but
for practical purposes, there may not be a significant difference between
different values of \math{K} for a large \math{n}. Let us take the
US House as an example, which has 435 representatives.
Suppose the voting population is about \math{n=\text{235 million}},
and that voter
competence is uniformly distributed from 0.45 to 0.52 (the average competence
is slightly less than \math{0.5}).
\begin{center}
  {\tabcolsep5.5pt
    \begin{tabular}{c|c|c|c|c|c|c}
    & \math{K=1} &\math{K_*=3}&\math{K^*=9}
  &\math{5\times\text{House}}
  &\math{2\times\text{House}}
  &\math{\text{House}}\\\hline
  Success rate &0\%&
  100\%&100\%&97\%&88\%&80\%
\end{tabular}
  }
  \end{center}
  In this simple setting, direct democracy (\math{K=1})
  would be wrong, and the current size of House is far from
  optimal, 20\% less accurate than what is achievable.
  Doubling congress gets you to 88\% and multiplying by 5 pretty much gets
  you to optimal. A House that is 20 times larger (each member representing 35K citizen) would be essentially
  indistinguishable from optimal. This  suggests that a much larger House is needed for noisy issues like the one in this example. Also note that if group sizes are about 5 (the size of a household) then we have near-perfection results.


\subsection{Optimal Group Size for Single Issue: Polynomial Cost of Voting}
In this subsection, we focus on the setting where the cost of voting and the benefit of correct decision are both polynomial. 
  
 \begin{thm}[Optimal homogeneous group size, polynomial cost and polynomial benefit]\label{thm:constantlinearcost} Suppose $\frac{\cost{L}}{\ben{n}}=\Theta(\frac{L^{q_1}}{n^{q_2}})$ for constants $q_1>0$ and $q_2>0$, there exists $K_*\in\mathbb N$ such that
      $\mu(K_*)>\frac12$, $\mu$ is non-decreasing,
      and for all $K\in\mathbb N$, \math{\mu(K)\le 1-A/K^{\alpha}} for constants \math{A>0}
            and $\alpha \ge 0$. Then, the optimal group size \math{K_\text{hom}^*(n)=\Omega(n/\log n)}.

Moreover, we have:

(i) If $\lim_{K\ra\infty}\mu(K)<1$, then   \math{K_\text{hom}^*(n)=\Theta(n/\log n)}.

(ii) If  there exists $B,\beta>0$ such that $\mu(K)\ge 1-B/K^\beta$, then \math{K_\text{hom}^*(n)=\Theta(n)}.

            \end{thm}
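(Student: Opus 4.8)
The plan is to reduce the problem to minimizing a \emph{deficit}. Since $\ben{n}$ is fixed once $n$ is fixed, maximizing $\sw{\vec H_K}$ over $K\in\{1,\dots,n\}$ is equivalent to minimizing $\mathrm{def}(K):=\bigl(1-\correct_n(\vec H_K,\mu)\bigr)+c_L$, where $L=\floor{n/K}$ and $c_L:=\cost{L}/\ben{n}$ satisfies $c_4 L^{q_1}/n^{q_2}\le c_L\le c_5 L^{q_1}/n^{q_2}$ for absolute constants $0<c_4\le c_5$. Lemma~\ref{lem:correct-bounds} controls the other term: whenever $\mu(K)\in(\tfrac12,1)$, the failure probability $1-\correct_n(\vec H_K,\mu)$ lies between $\tfrac38\,\rho(K)^{L/2}/\sqrt{\pi L}$ and $\tfrac{2}{2\mu(K)-1}\,\rho(K)^{L/2}/\sqrt{\pi L}$ with $\rho(K):=4\mu(K)(1-\mu(K))<1$. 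Thus the failure term decays geometrically in $L$ while the cost term grows polynomially in $L$, and the optimal $K$ is where these balance. The whole theorem then follows by exhibiting a good \emph{comparison size} and showing that every ``too small'' $K$ (i.e. $L=\floor{n/K}$ too large) and --- for parts (i), (ii) --- every ``too large'' $K$ is strictly beaten by it.

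For the general bound $K_\text{hom}^*(n)=\Omega(n/\ln n)$: put $C_0=2(q_2+1)/|\ln(1-4\epsilon^2)|$, $K_0=\floor{n/(C_0\ln n)}$, $L_0=\floor{n/K_0}=\Theta(\ln n)$. For large $n$ we have $K_0\ge K_*$, so $\mu(K_0)\ge\tfrac12+\epsilon$ and $\rho(K_0)\le 1-4\epsilon^2$; Lemma~\ref{lem:correct-bounds} gives $1-\correct_n(\vec H_{K_0},\mu)\le \tfrac1\epsilon(1-4\epsilon^2)^{L_0/2}=O(n^{-(q_2+1)})$, hence $\mathrm{def}(K_0)=O\bigl((\ln n)^{q_1}n^{-q_2}\bigr)$. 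Now let $C_1$ be a large enough constant multiple of $C_0$ (depending only on $c_4,c_5,q_1$), and take any $K$ with $L=\floor{n/K}\ge C_1\ln n$; then $\mathrm{def}(K)\ge c_L\ge c_4(C_1\ln n)^{q_1}/n^{q_2}$, which --- by the choice of $C_1$ --- exceeds $\mathrm{def}(K_0)$ for all large $n$. Hence no $K$ with $\floor{n/K}\gtrsim C_1\ln n$ is optimal, i.e. $K_\text{hom}^*(n)>n/(C_1\ln n)=\Omega(n/\ln n)$.

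Part (i): if $\mu_\infty:=\lim_K\mu(K)<1$, then for $K\ge K_*$ we have $\mu(K)\in[\tfrac12+\epsilon,\mu_\infty]$, so $\rho(K)\ge\rho:=4\mu_\infty(1-\mu_\infty)\in(0,1)$. Fix $c<2q_2/|\ln\rho|$. Any $K$ with $\floor{n/K}=L\le c\ln n$ has $K\ge K_*$ for large $n$, and by the lower half of Lemma~\ref{lem:correct-bounds}, $\mathrm{def}(K)\ge 1-\correct_n(\vec H_K,\mu)\ge\tfrac38\rho^{L/2}/\sqrt{\pi L}\ge\tfrac38\, n^{-(c/2)|\ln\rho|}/\sqrt{\pi c\ln n}$. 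Since $(c/2)|\ln\rho|<q_2$ this is $\gg (\ln n)^{q_1}n^{-q_2}\ge\mathrm{def}(K_0)$, so such $K$ is suboptimal; hence $K_\text{hom}^*(n)<n/(c\ln n)=O(n/\ln n)$, which with the general bound gives $K_\text{hom}^*(n)=\Theta(n/\ln n)$. Part (ii): if $\mu(K)\ge1-B/K^\beta$, then for a constant $\ell$ and $K_\ell=\floor{n/\ell}$ we have $1-\mu(K_\ell)=O(n^{-\beta})$, and applying the binomial tail bound of Lemma~\ref{lem:binomial-tail} to a majority of $\ell$ near-certain representatives gives $1-\correct_n(\vec H_{K_\ell},\mu)=O(n^{-\beta\lceil\ell/2\rceil})$; choosing the least constant $\ell_0$ with $\beta\lceil\ell_0/2\rceil\ge q_2$ yields $\mathrm{def}(K_{\ell_0})\le c_6 n^{-q_2}$ for a constant $c_6$. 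For $\ell_1:=\max\{\ell_0,\lceil(c_6/c_4)^{1/q_1}\rceil\}+1$ and any $K$ with $\floor{n/K}\ge\ell_1$, we get $\mathrm{def}(K)\ge c_4\ell_1^{q_1}/n^{q_2}>c_6 n^{-q_2}\ge\mathrm{def}(K_{\ell_0})$, so $K$ is suboptimal; hence $\floor{n/K_\text{hom}^*(n)}<\ell_1$, i.e. $K_\text{hom}^*(n)>n/\ell_1=\Omega(n)$, and with $K_\text{hom}^*(n)\le n$ we conclude $\Theta(n)$.

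The two places needing care are: (a) the hidden constants $c_4\le c_5$ in $\cost{L}/\ben{n}=\Theta(L^{q_1}/n^{q_2})$ need not match, so the suboptimality comparisons require a \emph{gap-amplification} choice of the two compared sizes --- a large constant-factor separation ($C_1/C_0$, resp. $\ell_1/\ell_0$) chosen so that the cost \emph{lower} bound at the far size beats the cost \emph{upper} bound at the comparison size; and (b) calibrating the comparison size so that its failure probability $1-\correct_n$ is dominated by its voting cost, which is exactly where consistency ($\mu(K_*)>\tfrac12$) enters for the general bound, boundedness of $\mu$ away from $1$ for part (i), and the polynomial rate $\mu(K)\ge1-B/K^\beta$ for part (ii). Everything else is routine manipulation of the binomial tail estimates in Lemmas~\ref{lem:binomial-tail} and~\ref{lem:correct-bounds}.
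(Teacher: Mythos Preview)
Your proposal is correct and follows essentially the same route as the paper's proof. Both arguments recast the optimization as minimizing a two-term deficit (failure probability plus normalized cost), invoke Lemma~\ref{lem:correct-bounds} to control the failure term, exhibit a comparison number of representatives ($\Theta(\log n)$ for the main claim and part~(i), a constant for part~(ii)), and then show any $L$ outside the target range is beaten by this comparison point because one of the two terms dominates. Your write-up is somewhat more explicit than the paper about the gap-amplification needed to handle the mismatched $\Theta$-constants $c_4\le c_5$, but the mathematical content is the same.
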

\begin{proof} We first prove that the optimal number of groups $L^*_\text{Hom}$ is $O(\log n)$. 
Let $K'$ be the smallest number such that $\mu(K')>0.5$.  For any $K>K'$ and $L=\lfloor n/K \rfloor$, it follows from Lemma~\ref{lem:correct-bounds} that
$$\correct_n(\vec H_{K},\mu)-\frac{\cost{L}}{\ben{n}}=1-Q(L,n),$$
where $Q(L,n)=\Theta(\frac{(4\mu(K)(1-\mu(K)))^{L/2}}{\sqrt{ L}}+\frac{L^{q_1}}{n^{q_2}})$. This is  because $\mu(K)>\mu(K')>0.5$.


Therefore, $L^*_\text{Hom}=\arg\min_{L}Q(L,n)$. For any $L\ge n/K'$, we let $p(L)=4\mu(\lceil n/L \rceil )(1-\mu(\lceil n/L\rceil))$, which means that $p(L)$ is non-decreasing in $L$. Let $c_1,c_2>0$ be such that for all $L\ge n/K'$,
$$c_1(\frac{p(L)^{L/2}}{\sqrt{L}}+\frac{L^{q_1}}{n^{q_2}})\le Q(L,n)\le c_2(\frac{p(L)^{L/2}}{\sqrt{L}}+\frac{L^{q_1}}{n^{q_2}})$$
Let $p'=4\mu(K')(1-\mu(K'))$. We have $p'<1$. Let $c=-2q_2/\log p'$, which means that $(p')^{\frac c2 \log n}=\frac{1}{n^{q_2}}$. When $n\ge2^{1/c}$, we have
\begin{align*}
Q(c\log n, n)&\le c_2(\frac{(p')^{\frac c2\log n}}{\sqrt {c\log n}}+\frac{(c\log n)^{q_1}}{n^{q_2}})\le c_2(\frac{1}{n^{q_2}\sqrt {c\log n}}+\frac{(c\log n)^{q_1}}{n^{q_2}})\le 2c_2\frac{(c\log n)^{q_1}}{n^{q_2}}
\end{align*}
For any $L\ge c \sqrt[q_1]{\frac{2c_2}{c_1}} \log n$, we have $Q(L, n)>c_1L^{q_1}/n^{q_2}=2c_2\frac{(c\log n)^{q_1}}{n^{q_2}} \ge Q(c \log n , n)$. It follows that $\arg\min_LQ(L, n)\le \frac{c_2c}{c_1}\log n =O(\log n)$.

To prove (i), let $b=\lim_{K\ra\infty}\mu(K)<1$. 
For any $L<-\frac{q_2}{\log (4b(1-b))}\log n$, we have 
\begin{align*}
&Q(L,n)> c_1\frac{p(L)^{L/2}}{\sqrt{L}}
>c_1\frac{(4b(1-b))^{L/2}}{\sqrt{L}}>c_1\frac{(4b(1-b))^{-\frac{q_2}{2\log (4b(1-b))}\log n}}{\sqrt{L}}=\Theta(\frac{1}{n^{q_2/2}\sqrt {\log n}})\\
\end{align*}
It follows that when $n$ is large enough, for any $L<-\frac{q_2}{\log (4b(1-b))}\log n$, we have  $Q(L,n)> 2c_2\frac{(c\log n)^{q_1}}{n^{q_2}}=Q(c\log n, n)$. This means that  $K_\text{hom}^*(n)=\Omega(n/\log n)$.

To prove (ii), let $\bar L=2q_2/\beta$, it is not hard to check that $Q(\bar L,n)=O((\frac{4B}{(n/\bar L)^\beta})^{\bar L/2}/\sqrt {\bar L} +\bar L^{q_1}/n^{q_2})=O(1/n^{q_2})$. Therefore, for any $L$ larger than some constant, we have $Q(L,n)>L^{q_1}/n^{q_2}>Q(\bar L,n)$, which means that $K_\text{hom}^*(n)=\Theta(n)$.
\end{proof}

\section{Optimal Representative Democracy for Multiple Issues}
\label{sec:multi}
We now extend our setting to  \math{d\ge 2} issues, \math{I_1,\ldots,I_d} with fixed cost of voting. Each voter's type is now represented by a  $2^d$-dimensional {\em competence vector} $\vec c$, which is a distribution over $\{0,1\}^d$. A voter's competence vector  represents her probability to cast a combination of votes over the $d$ issues. Let $\Delta_{2^d}$ denote the \math{2^d}-dimensional simplex, which is the set of all possible competence vectors. We assume that each voter's competence vector is generated i.i.d.~from a distribution $F$ over $\Delta_{2^d}$.

\begin{ex}\label{ex:multi}\rm
For two issues $I_1$ and $I_2$, let $F$ be the uniform distribution over two competence vectors $\vec c_1,\vec c_2$, as illustrated below, where
$(0,1)$ means that the voter is wrong on issue $1$ but correct on issue~$2$.

\begin{center}
{\renewcommand{\arraystretch}{1.2}
\begin{tabular}{|c|c|c|c|c|}
\hline
$F$&$(0,0)$&$(0,1)$&$(1,0)$&$(1,1)$\\
\hline 
$\frac{1}{2}@\vec c_1$& $2/9$& $4/9$& $1/9$& $2/9$ \\
\hline
$\frac{1}{2}@\vec c_2$& $1/9$& $1/9$& $4/9$& $3/9$ \\
\hline 
\end{tabular}}
\end{center}

The two issues are independent in $\vec c_1$: issue $I_1$ (respectively, $I_2$) takes $1$ with probability $\frac13$ (respectively, $\frac 23$), independent of the other issue. The two issues are correlated in $\vec c_2$.\hfill$\blacksquare$\end{ex}

Suppose \math{K} voters
\math{v_{\ell,1},\ldots,v_{\ell,K}} 
choose a representative \math{r_\ell} with competence vector $\vec c_\ell$.
As with one issue, the
representative can be summarized by $\Exp(\vec c_\ell)\in \Delta_{2^d}$. Thus,
we can succinctly describe a
\math{d}-issue representative selection process as a
group competence function.
\begin{dfn}\label{def:rep-dem}
  A representative selection process over \math{d} issues is a {\em group competence function} \math{\rho:\N\mapsto \Delta_{2^d}}.\footnote{We use $\rho$ to distinguish from single-issue group competence function $\mu$.} The multi-issue
  competence function
  \math{\rho} induces \math{d} single-issue {\em marginal group competence functions}
  \math{\mu_1, \mu_2, \ldots, \mu_d}, where \math{\mu_i} is  the marginal distribution for
  issue \math{I_i}. For any $K\in \mathbb N$, let \math{\rho(K)=(p_{\alpha_1,\alpha_2,\ldots,\alpha_d}:\forall i, \alpha_i\in\{0,1\})}.  Then,
$
  \mu_i(K)=\sum_{\bm\alpha:\alpha_i=1}p_{\alpha_1,\alpha_2,\ldots,\alpha_d}$.
\end{dfn}

\begin{ex}\label{ex:multirep} \rm
  Continuing Example~\ref{ex:multi}, suppose for a group of two voters
  independently sampled from \math{F},
  the representative selection process is to choose the voter with maximum expected number of correct votes. Then, the representative will be a $\vec c_2$
  voter with probability $\frac 34$ and a $\vec c_1$ voter with probability $\frac 14$. This group competence function, called {\sc max-sum} and denoted by $\rho_\text{ms}$, will output 
$$\rho_\text{ms}(2)=\frac 14\vec c_1+\frac34 \vec c_2=\left(\frac{5}{36},\frac{7}{36},\frac{13}{36},\frac{11}{36}\right)$$
And the marginal group competence functions are $\mu_1(2)=\frac{13}{36}+\frac{11}{36}=\frac23$, $\mu_2(2)=\frac{7}{36}+\frac{11}{36}=\frac12$.\hfill$\blacksquare$
\end{ex}

As with a single issue, when there are $d$ issues, given a partition function $\vec K$ and a group competence function $\rho$, we let $S^d_{n,\vec K,\rho}$ denote the $2^d$-dimensional random variable that is the average of $L(n)=|\vec K(n)|$ independent random variables $(\rho([\vec K(n)]_1),\ldots,\rho([\vec K(n)]_{L(n)}))$, where each $\rho([\vec K(n)]_i)$ represent the random vote on $d$ issues by the representative of group $i$. We let $\correct^d_n(\vec K,\rho)=\Prob[S^d_{n,\vec K,\rho}>\frac 12\cdot \vec 1]$ denote the probability that majority voting is correct on all $d$ issues.

\subsection{Consistent Representative Democracy for Multiple Issues}

Our next theorem extends the Condorcet Jury Theorem to representative democracy with multiple issues. It states that when the marginal group competence functions $\mu_i$'s are monotonic, the representative democracy is consistent if and only for each issue, there exists a group size for which marginal group competence of that issue is strictly larger than $0.5$.
\begin{thm}\label{thm:consistent-multiple}
  For $d\ge 2$, let $\rho$ be a $d$-issue group competence function
  with monotonic marginals $\mu_i$.\footnote{Technically, all we need is that
    \math{\mu_i(K)} can be lower-bounded by an increasing function of \math{K},
    and that this lower bound exceeds \math{0.5} for some \math{K_i}.}
Then, $\rho$ is consistent w.r.t. every
  issue if and only if every marginal is consistent, i.e.
for all $i\in\{1,\ldots,d\}$, there exists $K_i$
  such that $\mu_i(K_i)>\frac 12$.
\end{thm}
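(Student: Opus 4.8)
The plan is to prove both directions separately, reducing the multi-issue consistency question to $d$ simultaneous single-issue questions via the union bound. For the ``only if'' direction I would argue by contraposition: suppose some marginal $\mu_i$ satisfies $\mu_i(K) \le \frac12$ for every $K \in \mathbb N$. Then for any partition function $\vec K(n)$, the representatives' votes on issue $I_i$ form a Poisson-Binomial random variable whose average success probability is $\frac{1}{L(n)}\sum_{j} \mu_i([\vec K(n)]_j) \le \frac12$ (here I use monotonicity only mildly — actually just the uniform bound $\mu_i(K)\le\frac12$). By Lemma~\ref{lem:poisson-binomial}, the probability that a strict majority of representatives is correct on issue $I_i$ is bounded by $\frac{1}{\sqrt 2} < 1$ (or $\frac12$ in the even case). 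Since correctness on \emph{all} issues is at most correctness on issue $I_i$ alone, $\correct^d_n(\vec K,\rho) \le \frac{1}{\sqrt 2}$ for every $n$ and every $\vec K$, so $\rho$ cannot be consistent on every issue. This contradicts the assumption, establishing that every marginal must be consistent.

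For the ``if'' direction, suppose for each $i$ there exists $K_i$ with $\mu_i(K_i) > \frac12$, say $\mu_i(K_i) \ge \frac12 + \epsilon_i$ for some $\epsilon_i > 0$; set $K_0 = \max_i K_i$ and $\epsilon = \min_i \epsilon_i$. Using the monotonicity of each $\mu_i$ (or, per the footnote, the increasing lower bound), we have $\mu_i(K_0) \ge \frac12 + \epsilon$ for all $i$ simultaneously. Now choose the homogeneous partition $\vec H_{K_0}$, which yields $L = \lfloor n/K_0 \rfloor$ representatives, all with the same group size $K_0$. For each fixed issue $I_i$, the event ``majority of representatives wrong on $I_i$'' is a Binomial lower-tail event with success probability $\mu_i(K_0) \ge \frac12+\epsilon$; by Lemma~\ref{lem:correct-bounds} (or directly by the Condorcet argument of~\citet{Fey03:Note}), this failure probability is at most $O\big((4\mu_i(K_0)(1-\mu_i(K_0)))^{L/2}/\sqrt L\big) \le O((1-4\epsilon^2)^{L/2}/\sqrt L)$, which tends to $0$ as $n \to \infty$. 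Taking a union bound over the $d$ issues,
\[
  1 - \correct^d_n(\vec H_{K_0},\rho) \;\le\; \sum_{i=1}^d \Prob[\text{majority wrong on } I_i] \;\le\; d\cdot O\!\left(\frac{(1-4\epsilon^2)^{L/2}}{\sqrt L}\right) \;\longrightarrow\; 0,
\]
so $\correct^d_n(\vec H_{K_0},\rho) \to 1$, i.e. $\rho$ is consistent on every issue.

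The only subtlety — and the place I'd expect to spend the most care — is handling the gap between the honest marginals $\mu_i$ and what Lemma~\ref{lem:poisson-binomial} literally requires in the ``only if'' direction: the lemma bounds a Poisson-Binomial with average probability $q \le \frac12$, but I need the average over the chosen (possibly inhomogeneous) partition to stay $\le \frac12$, which follows immediately from the pointwise bound $\mu_i(K) \le \frac12$ for all $K$. The even/odd dichotomy in Lemma~\ref{lem:poisson-binomial} is harmless here since any constant bound below $1$ suffices to rule out consistency. In the ``if'' direction, the monotonicity hypothesis is exactly what lets a \emph{single} common group size $K_0$ work for all issues at once — without it one group size might help issue $1$ but hurt issue $2$ — and the union bound over the $d$ issues is what the informal statement elsewhere (Theorem~\ref{thm:multiissue}) will later sharpen to obtain the $\Theta(\ln d)$ blow-up in optimal group size.
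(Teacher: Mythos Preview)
Your proposal is correct and follows essentially the same approach the paper sketches: the ``if'' direction uses monotonicity to find a single common group size $K_0$ and then a union bound over the $d$ issues, and the ``only if'' direction reduces to a single failing marginal exactly as in Theorem~\ref{thm:cjt}. One small remark: for the ``only if'' direction you invoke Lemma~\ref{lem:poisson-binomial}, which is fine, but the lighter argument from the proof of Theorem~\ref{thm:cjt} already suffices---since $\mu_i(K)\le\frac12$ for every $K$, the Poisson-Binomial on issue $I_i$ is stochastically dominated by a Binomial with parameter $\frac12$, whose strict-majority probability is at most $\frac12$.
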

The proof uses the union bound and is similar to the proof of Theorem~\ref{thm:cjt}. The full proof can be found in the appendix.

\subsection{Optimal Group Size for Multiple Issues}
We now
prove the analog of Theorem~\ref{thm:constant} for multiple issues. Namely
an upper bound on  the optimal homogeneous group size for any consistent multi-issue
representative democracy. We recall that $K_\text{hom}^*(n)$ is the optimal partition function for $n$ voters.

\begin{thm}\label{thm:multiissue} Let \math{\rho} be a $d$-issue group
  competence function for which there exists $K_*\in\mathbb N$ such that
for all $i\le d$, $\mu_i(K_*)\ge\frac12+\epsilon$ for constant \math{0<\epsilon<\frac12},
      and \math{\mu_i(K)\le 1-A/K^{\alpha}} for constant \math{A<1} and $\alpha \ge 0$. Then, $K_\text{hom}^*(n)\le (c- \frac{8}{\ln(1-4\epsilon^2)} \ln d) K_*$, where 
      \mand{
        c=\frac{-4}{\ln(1-4\epsilon^2)}\left(        
    \ln \frac{32}{9\epsilon^2 A}+
    \alpha\ln\frac{4\alpha K_*}{e}-\alpha\ln|\ln(1-4\epsilon^2)|\right)
        .}      
\end{thm}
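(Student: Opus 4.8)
The plan is to mimic the proof of Theorem~\ref{thm:constant}, but replace the single binomial tail estimate by a union bound over the \math{d} issues. Fix a homogeneous group size \math{K>(c-\frac{8}{\ln(1-4\epsilon^2)}\ln d)K_*} and set \math{L_K=\floor{n/K}}, \math{L_*=\floor{n/K_*}}, so again \math{L_K\le L_*}. For each issue \math{i}, write \math{\mu_{i,K}=\mu_i(K)} and \math{\mu_{i,*}=\mu_i(K_*)\ge\frac12+\epsilon}. The probability that majority voting is correct on \emph{all} \math{d} issues equals \math{1} minus the probability that it fails on at least one issue, and by the union bound the latter is at most \math{\sum_{i=1}^d(1-\correct_n^{(i)})} where \math{\correct_n^{(i)}} is the single-issue success probability for issue \math{i}. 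So for the partition \math{\vec H_{K_*}} I can lower-bound \math{\correct_n^d(\vec H_{K_*},\rho)\ge 1-\sum_{i=1}^d(1-\correct_-(L_*,\mu_{i,*}))}, and for \math{\vec H_K} I can upper-bound \math{\correct_n^d(\vec H_{K},\rho)\le \min_i\correct_+(L_K,\mu_{i,K})} (correctness on all issues is at most correctness on any single issue). It then suffices to show the former exceeds the latter, which would prove \math{K} is not optimal.

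The next step is to handle the case \math{\mu_{i,K}\le\frac12} for some issue \math{i}: then Lemma~\ref{lem:max-rplus} gives \math{\correct_+(L_K,\mu_{i,K})\le\frac34}, so it is enough to show \math{1-\sum_{i=1}^d(1-\correct_-(L_*,\mu_{i,*}))>\frac34}, i.e.\ \math{\sum_i(1-\correct_-(L_*,\mu_{i,*}))<\frac14}. Using the \math{\correct_-} bound from Lemma~\ref{lem:correct-bounds} and \math{\mu_{i,*}(1-\mu_{i,*})\le\frac14(1-4\epsilon^2)}, each term is at most \math{\frac{2}{2\epsilon}\cdot\frac{(1-4\epsilon^2)^{L_*/2}}{\sqrt{\pi L_*}}}, so the sum is at most \math{\frac{d}{\epsilon}\cdot\frac{(1-4\epsilon^2)^{L_*/2}}{\sqrt{\pi L_*}}}; the extra factor \math{d} (relative to the single-issue argument) is absorbed by the \math{-\frac{8}{\ln(1-4\epsilon^2)}\ln d} term in the threshold for \math{K}, since that forces \math{L_*\ge \frac{n}{2K_*}} to be large enough that \math{(1-4\epsilon^2)^{L_*/2}} beats \math{d}. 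So assume \math{\mu_{i,K}>\frac12} for every issue \math{i}.

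In the main case, for each issue \math{i}, the single-issue computation of Theorem~\ref{thm:constant} shows \math{\correct_-(L_*,\mu_{i,*})-\correct_+(L_K,\mu_{i,K})\ge(\text{positive})[1-C(\cdots)^{L_K/2}]} and, under \math{K/K_*>c}, that this difference is nonnegative; here we need the slightly stronger conclusion that \math{1-\correct_-(L_*,\mu_{i,*})\le\frac{1}{2d}(1-\correct_+(L_K,\mu_{i,K}))}, or more simply that \math{\sum_{i}(1-\correct_-(L_*,\mu_{i,*}))\le \min_i(1-\correct_+(L_K,\mu_{i,K}))}. This is where the extra slack \math{-\frac{8}{\ln(1-4\epsilon^2)}\ln d} in the threshold is used: re-running the log-estimate at the end of the proof of Theorem~\ref{thm:constant} with the augmented constant \math{c'=c-\frac{8}{\ln(1-4\epsilon^2)}\ln d} produces an extra additive \math{-\frac{L_K\ln(1-4\epsilon^2)}{8}\cdot\big(\!-\frac{8}{\ln(1-4\epsilon^2)}\ln d\big)=L_K\ln d\ge\ln d} in the (negative) logarithm, i.e.\ a factor \math{1/d} improvement, which kills the union-bound loss. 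The upper bound \math{\correct_+\le1-\frac{3}{8p}\cdot\frac{(4p(1-p))^{L/2}}{\sqrt{\pi L}}} and lower bound on \math{\correct_-} from Lemma~\ref{lem:correct-bounds}, together with \math{\mu_{i,K}(1-\mu_{i,K})\ge A/2K^\alpha} and \math{\mu_{i,*}(1-\mu_{i,*})\le\frac14(1-4\epsilon^2)}, feed into exactly the same chain of inequalities as before, issue by issue. Summing the \math{d} single-issue gaps and comparing with the single worst \math{\correct_+} term completes the argument.

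The main obstacle I anticipate is bookkeeping the union-bound loss cleanly: I must make sure the factor \math{d} is genuinely absorbed in \emph{both} the small-\math{\mu_{i,K}} case (where it affects the \math{\frac14} threshold and hence how large \math{L_*} must be) and the main case (where it must be dominated by the \math{(\cdots)^{L_K/2}} geometric factor). Verifying that the single choice of augmented constant \math{c'=c-\frac{8}{\ln(1-4\epsilon^2)}\ln d} suffices simultaneously for all \math{d} issues — using only \math{\mu_i(K_*)\ge\frac12+\epsilon} and \math{\mu_i(K)\le1-A/K^\alpha}, which hold uniformly in \math{i} — is the crux; everything else is a term-by-term replay of Theorem~\ref{thm:constant}.
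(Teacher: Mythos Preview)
Your approach is correct and essentially the same as the paper's: the paper packages your union-bound lower bound and your worst-single-issue upper bound into a multi-issue analogue of Lemma~\ref{lem:correct-bounds} (stated in terms of the minimum marginal \math{q=\min_i\mu_i(K)}), then repeats the logarithmic estimate from Theorem~\ref{thm:constant} with \math{C} replaced by \math{dC}, so that the extra \math{\ln d} is absorbed exactly by the augmented threshold, just as you describe. One minor slip: the sufficient inequality is \math{\sum_i(1-\correct_-(L_*,\mu_{i,*}))<\max_i(1-\correct_+(L_K,\mu_{i,K}))}, not \math{\min_i}; since the hypotheses on \math{\mu_i} are uniform in \math{i}, this does not affect the argument.
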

\begin{proof}
  Given any distribution $\gamma$ over $\{0,1\}^d$ and any $L\ge 1$, we let $Q_-(L,\gamma)$ (respectively, $Q_+(L,\gamma)$) denote the probability that for
  every one of the $d$ issues, the majority of voters vote for $1$, where there are $L$ independent voters, the first $L-1$ vote according to $\gamma$, and the last voter always vote for $0$ (respectively $1$) for all issues. We first extend Lemma~\ref{lem:correct-bounds} to multiple issues.

\begin{lem}\label{lem:bound-correctM}
  For any multi-issue representative democracy with group competence function \math{\rho} and
  given
  \math{n, K}, let the marginals be \math{\mu_i} and let
  \math{\mumin=\min_i\mu_i(K)} be the minimum marginal. Then,
  \mand{
    1-d\left(\frac{2}{(2\mumin-1)}\right)\cdot\frac{(4\mumin(1-\mumin))^{L/2}}{\sqrt{\pi L}}
    \le\correctM_-(L,\rho(K))\le\correctM_+(L,\rho(K))\le
      1-\frac{3}{8\mumin}\cdot\frac{(4\mumin(1-\mumin))^{L/2}}{\sqrt{\pi L}}
  }
\end{lem}
We can now mimic the analysis in the proof of Theorem~\ref{thm:constant}
to prove an upper bound on the optimal group size in a multi-issue
representative democracy.
%
We  give only the main steps, omitting some details. Let $L_*=\floor{n/K_*}$, $q_*=\min_i\mu_i(K_*)$, $L_K=\floor{n/K}$, and $q_K=\min_i\mu_i(K)$.
  Using Lemma~\ref{lem:bound-correctM},
  \math{q_*\ge\frac12+\epsilon} and \math{q_k\le 1-A/K^\alpha}
  (which implies \math{q(1-q)\ge A/2K^\alpha}), we
  get
\begin{align*}
\correct^d_n(\vec H_{K_*},\rho)-\correct^d_n(\vec H_{K},\rho)
    \ge &\correctM_-(L_*,\rho(K_*))-\correctM_+(L_K,\rho(K))  
     \ge 
    (\text{pos})
        \left[1-
      dC\left(\frac{K^\alpha(1-4\epsilon^2)^{K/2K_*}}{2A}\right)^{L_K/2}\right]
\end{align*}
Here $C={8\mu_K}/{3\epsilon} \le {8}/{3\epsilon}$.
  It now remains to prove that the expression in square parentheses
  is positive, for which it suffices to show that the logarithm of the
  second term is at most 0. Taking the logarithm of the second term,
   \eqan{
    &&
    L_K\left(\frac{K}{4K_*}\ln(1-4\epsilon^2)
    +
    \frac\alpha2\ln K
    -
    \frac12\ln 2A\right)+\ln (dC)\\
    &{\buildrel {(L_K\ge 1,  dC>1)}\over\le}&
     L_K\left(\frac{K}{4K_*}\ln(1-4\epsilon^2)
    +
   \frac\alpha2\ln K
    -
    \frac12\ln 2A+\ln (dC)\right)\\
    &\le&
    L_K\left(\frac{K}{4K_*}\ln(1-4\epsilon^2)
    +
   \frac\alpha2\left(\ln\frac{-4\alpha K_*/e}{\ln(1-4\epsilon^2)}-
   \frac{\ln(1-4\epsilon^2)}{4\alpha K_*}K\right)
    -
    \frac12\ln 2A+\ln \frac{8}{3\epsilon}+\ln d\right)\\
    &=&
\frac{L_K   \ln(1-4\epsilon^2)}{8} \left(\frac{K}{K_*}-c
+\frac{8\ln d}{\ln(1-4\epsilon^2)}\right)\le0
    }
The last step follows from the choice of $c$.
\end{proof}

\subsection{Realizing a Consistent Representative Democracy}
In Theorem~\ref{thm:multiissue}, the group competency function $\rho$ is given as part of the input. In this section we focus on representative selection processes that would lead to a consistent representative democracy. We first start with an example on $F$ where no representative democracy is consistent. 

\begin{ex}\label{ex:multiinconsistent}\rm
  Let $F$ be the uniform distribution over two competence vectors: $\vec c_1$ from
  Example~\ref{ex:multi} and $\vec c_2=(\frac29,\frac19,\frac49,\frac29)$ for $(0,0), (0,1), (1,0), (1,1)$, respectively. The two issues are independent in both $\vec c_1$ ($\frac 13$ for $I_1$ and $\frac 23$ for $I_2$) and $\vec c_2$ ($\frac 23$ for $I_1$ and $\frac 13$ for $I_2$). We note that for each type of
  voter, the marginal competences of two issues always sum up to $1$.

Let $\rho$ denote the group competence function of any representative selection process that chooses a group member  as the representative. Then, the marginal competences of the chosen representative must sum up to $1$. Therefore, for any $K$, we must have $\mu_{1}(K)+\mu_{2}(K)=1$. Therefore, for any partition function $\vec K$ and any $n$, either the average marginal competence for $I_1$ is no more than $0.5$, or the average marginal competence for $I_2$ is no more than $0.5$, which means that the majority vote of $I_1$ or $I_2$ will not be correct
with probability $1$ as $n\rightarrow \infty$ (Lemma~\ref{lem:poisson-binomial}).
\hfill$\blacksquare$
\end{ex}

We see in Example~\ref{ex:multiinconsistent} that when the issues are correlated, sometimes nothing can
be done to get multi-issue consistency. A natural selection process is for  a group to choose the voter
having the maximum ``total competence'' over all the issues. This is the process used in Example~\ref{ex:multi}.
	
\begin{dfn}[Max-Sum Process]
Given a group of $K$ voters whose votes on $d$ issues are represented by random variables $v_1,\ldots,v_K$, the {\sc max-sum} representative selection process chooses a voter $i$ with maximum $\Exp[v_i]\cdot \vec 1$. Let $\rho_\text{ms}$ denote its group competence function.
\end{dfn}
$\rho_\text{ms}$ naturally extends the {\sc max} group competence function $\mu_{\max}$ defined in Example~\ref{example:uniform-max}. We will prove that $\rho_\text{ms}$ works well when the $d$ issues are independent, formally defined below. 
\begin{dfn}[Independent Issues] In this setting, each voter's competence vector is uniquely characterized by a vector of $d$ numbers $(p_1,\ldots,p_d)$, where $p_i$ is the probability that the voter's vote for issue $i$ is correct,
  independent of the other issues. We further assume that $F$ is a product distribution
\math{f_1(p_1)f_2(p_2)\cdots f_d(p_d)}, and each \math{f_j(\cdot)} has support on \math{(\frac12,1]}.\footnote{\math{M_i=\sup_p\{p:f_i(p)>0\}\ge \frac12+\epsilon} and \math{f_i} has nonzero support at
    \math{M_i}, that is 
\math{\Prob[p_j\ge M_i-\delta]>0}
for all \math{\delta>0}.}
\end{dfn}

\begin{thm}\label{thm:consistent-multiple}
 The max sum representative democracy is consistent for independent issues.
\end{thm}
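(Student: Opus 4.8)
The plan is to show that under the {\sc max-sum} process with independent issues, the marginal group competence $\mu_i(K)$ for every issue $i$ converges to $1$ as $K\to\infty$, and then invoke Theorem~\ref{thm:consistent-multiple} (the multi-issue Condorcet Jury Theorem) to conclude consistency. Since each $f_j$ has support reaching up to $M_j\ge\frac12+\epsilon$ with nonzero mass arbitrarily close to $M_j$, it would already be easy to see that if a group \emph{picked the max in coordinate $i$}, then $\mu_i(K)\to M_i>\frac12$; the difficulty is that {\sc max-sum} picks the voter maximizing the \emph{sum} $\sum_j p_j$, which need not be the one maximizing coordinate $i$. So the real content is a decoupling argument: even though the selection criterion couples all coordinates, the selected voter is still, with high probability, very good in every single coordinate.

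First I would set up notation: in a group of $K$ i.i.d.\ voters, each voter $v$ has a competence vector $(p_1^{(v)},\dots,p_d^{(v)})$ with independent coordinates $p_j^{(v)}\sim f_j$. Let $r$ be the selected representative, i.e.\ the $v$ maximizing $\Sigma^{(v)}=\sum_j p_j^{(v)}$. Fix an issue $i$ and a tolerance $\delta>0$; I want to show $\Prob[p_i^{(r)} < M_i-\delta]\to 0$ as $K\to\infty$, which by bounded convergence gives $\mu_i(K)=\Exp[p_i^{(r)}]\to M_i\ge\frac12+\epsilon$. The key step is a ``witness'' argument: with probability tending to $1$, the group contains a voter $w$ that is simultaneously within $\delta/2$ of $M_j$ in \emph{every} coordinate $j$ — call this a near-ideal voter. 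Because the coordinates are independent and each $f_j$ has positive mass in $(M_j-\delta/2, M_j]$, a single voter is near-ideal with some fixed probability $\pi(\delta)=\prod_j \Prob[p_j\ge M_j-\delta/2]>0$, so the probability that \emph{no} voter in the group is near-ideal is $(1-\pi(\delta))^K\to 0$. Conditioned on a near-ideal witness $w$ existing, its sum satisfies $\Sigma^{(w)}\ge \sum_j M_j - d\delta/2$. Now if the selected $r$ had $p_i^{(r)}<M_i-\delta$, then since every coordinate is at most $M_j$, we get $\Sigma^{(r)} \le \sum_j M_j - \delta < \Sigma^{(w)}$, contradicting maximality of $r$. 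Hence $p_i^{(r)}\ge M_i-\delta$ whenever a near-ideal witness exists, and the failure probability is at most $(1-\pi(\delta))^K$.

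Putting it together: $\Exp[p_i^{(r)}] \ge (M_i-\delta)\bigl(1-(1-\pi(\delta))^K\bigr)$, so $\liminf_{K\to\infty}\mu_i(K)\ge M_i-\delta$ for every $\delta>0$, giving $\liminf_K\mu_i(K)\ge M_i\ge \frac12+\epsilon$. In particular there exists $K_i$ with $\mu_i(K_i)>\frac12$ for every $i$. To apply Theorem~\ref{thm:consistent-multiple} I also need the marginals to be monotonic (or at least lower-boundable by an increasing function exceeding $\frac12$); the footnote to that theorem already weakens the hypothesis to exactly this, and the bound $\mu_i(K)\ge (M_i-\delta)(1-(1-\pi(\delta))^K)$ is an increasing-in-$K$ lower bound that exceeds $\frac12$ for $K$ large, so the hypothesis is met. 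Then Theorem~\ref{thm:consistent-multiple} delivers consistency of $\rho_{\text{ms}}$ on every issue, which is the claim.

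The main obstacle is exactly the coupling in the max-sum criterion, and the near-ideal-witness argument above is the device that resolves it — it trades the (hard to control) joint distribution of the argmax for the (easy) event that \emph{some} voter is good in all coordinates at once, which is where independence of the $f_j$ and the nonzero-support-at-$M_j$ assumption are used. A secondary technical point is confirming that the lower bound one obtains is genuinely of the ``increasing function of $K$ exceeding $\tfrac12$'' form required by the cited theorem's footnote; this is immediate from the explicit expression. If one instead wanted a quantitative rate (to feed into Theorem~\ref{thm:multiissue}), one could note $\pi(\delta)$ is bounded below by a polynomial in $\delta$ when the $f_j$ are bounded densities, yielding a polynomial convergence rate for $\mu_i$, but that is not needed for mere consistency.
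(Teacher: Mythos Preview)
Your approach is essentially the paper's: both use a ``near-ideal voter'' (all coordinates close to their suprema $M_j$) to force the max-sum representative to have every marginal above $\tfrac12$, then invoke the multi-issue consistency theorem via the increasing lower bound. However, your contradiction step contains an arithmetic slip that breaks the argument for $d\ge 2$.

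Specifically, with the near-ideal witness defined by the tolerance $\delta/2$ in each coordinate, you get $\Sigma^{(w)}\ge M-d\delta/2$. You then claim that if $p_i^{(r)}<M_i-\delta$ then $\Sigma^{(r)}\le M-\delta<\Sigma^{(w)}$. But for $d\ge 2$ we have $d\delta/2\ge\delta$, hence $M-d\delta/2\le M-\delta$, and the strict inequality $M-\delta<\Sigma^{(w)}$ is not guaranteed; the contradiction does not go through. The fix is immediate: define near-ideal with tolerance $\delta/(2d)$ per coordinate (this is exactly what the paper does, with $\delta=\epsilon$), so that $\Sigma^{(w)}\ge M-\delta/2>M-\delta\ge\Sigma^{(r)}$. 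Equivalently, one can bypass the contradiction entirely by observing, as the paper does, that the representative \emph{is} the argmax of the sum, so $\Sigma^{(r)}\ge\Sigma^{(w)}\ge M-\delta/2$ directly, and then $p_i^{(r)}=\Sigma^{(r)}-\sum_{j\ne i}p_j^{(r)}\ge (M-\delta/2)-(M-M_i)=M_i-\delta/2$ since every $p_j^{(r)}\le M_j$. With this correction your lower bound $\mu_i(K)\ge(M_i-\delta)\bigl(1-(1-\pi)^K\bigr)$ stands and the rest of your plan (increasing-in-$K$ lower bound exceeding $\tfrac12$, then the multi-issue theorem) is exactly the paper's conclusion.
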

\begin{proof}
  Based on the distributions \math{f_i(\cdot)}, let \math{M_i} be the
  maximum possible value attainable by \math{p_i}
  (\math{M_i\ge\frac12+\epsilon})
  and let \math{M=\sum_iM_i} be the maximum possible value for
  a voter's sum of probabilities \math{\sum_{i=1}^dp_i}. The basic idea in the
  proof is to show that for sufficiently large \math{K}, the representative's
  \math{\sum_i{p_i}} approaches \math{M}, which means that each probability
  must approach \math{M_i} (all above \math{\frac12+\epsilon}), and once
  that happens, the majority vote among many representatives will get
  all issues correct.

  For a voter, let \math{s=\sum_ip_i}. First, let us prove that
  the probability for \math{s} to be close to its maximum possible value
  \math{M} is
  large. Recall that \math{M_i\ge\frac 12+\epsilon}. Let
  \math{P_i=\Prob[p_i\ge M_i-\frac1{2d}\epsilon]>0} (because \math{f_i}
  has
  support on \math{(\frac12,1]}).
    Let \math{P=P_1\times\cdots\times P_d}. Then,
    \math{
      \{
      p_1\ge M_1-\frac1{2d}\epsilon\text{ and }
      p_2\ge M_2-\frac1{2d}\epsilon\text{ and }
      \ldots\text{ and }
      p_d\ge M_d-\frac1{2d}\epsilon
      \}
    }
    implies \math{\{s\ge M-\frac12\epsilon\}} and we have 
  $
      \Prob[s\ge M-{\textstyle\frac12}\epsilon]
      \ge
      P_1\times\cdots\times P_d=P>0$.
 
    Let us now consider \math{K} independent voters and the sums
    of probabilities \math{s_1,\ldots,s_K}. The representative
    \math{r} is picked as the voter with maximum sum, and we have
    \eqan{
      \Prob[\max_\ell s_\ell\ge M-{\textstyle\frac12}\epsilon]
      &=&
      1-\prod_\ell\Prob[s_\ell< M-{\textstyle\frac12}\epsilon]=
      1-\Prob[s< M-{\textstyle\frac12}\epsilon]^K
      \ge
      1-(1-P)^K.
    }
    Observe that \math{s\ge M-{\textstyle\frac12}\epsilon} implies
    \math{p_i\ge M_i-{\textstyle\frac12}\epsilon} for every issue
    \math{I_i}. Therefore, for the representative \math{r},

$\hfill
      \Prob[p_i\ge M_i-{\textstyle\frac12}\epsilon]
    \ge
    \Prob[s(r)\ge M-{\textstyle\frac12}\epsilon]\ge
    1-(1-P)^K.
\hfill$

    Since \math{M_i\ge\frac12+\epsilon},
    \math{p_i\ge M_i-{\textstyle\frac12}\epsilon}
    implies \math{p_i\ge\frac12+\frac12\epsilon}, and so $
      \Prob[p_i\ge{\textstyle\frac12+\frac12\epsilon}]\ge
      1-(1-P)^K$.
      
    The marginal single-issue competence function
    for issue \math{I_i} is just \math{\Exp[p_i]},

$\hfill
      \mu_i(K)
      =\Exp[p_i]\ge\Prob[p_i\ge{\textstyle\frac12+\frac12\epsilon}]\times
         {\textstyle(\frac12+\frac12\epsilon)}
         \ge  (1-(1-P)^K) \times
         {\textstyle(\frac12+\frac12\epsilon)}
\hfill$

    We thus have a lower bound for each marginal density which is
    monotonically increasing in \math{K}.
    Further, by setting \math{K> \log(\frac{\epsilon}{1+\epsilon})/\log(1-P)},
    we find that
    \math{\mu_i(K)>\frac12}. Therefore
    the marginal single-issue group competence functions are
    monotonic and consistent. By Theorem~\ref{thm:consistent-multiple},
    the multiple-issue
    group competence function is consistent. 
\end{proof}

\section{Summary and Future Work}
We set the mathematical foundation for studying the quality-quantity tradeoff
  in a representative democracy by introducing a mathematical framework for
studying representative democracy, and show that under general and natural
conditions, the optimal group size is constant when the cost of voting is a constant, and is $\Omega(n/\log n)$ when the cost and benefit are both polynomial.

There are many open questions and future directions under our framework. Can we extend our results to inhomogeneous representative selection processes, e.g.~different states use different processes to choose representatives? Does diversity in population help make better decisions? More generally, it would be interesting and important to consider similar extensions as done for the Condorcet Jury Theorem, for example to  inhomogeneous agents and strategic agents.




\end{document}